\newcommand{\OMIT}[1]{}
\newcommand{\D}{{\cal D}} 
\newcommand{\G}{{\cal G}}
\newcommand{\np}{{\rm NP}}
\newcommand{\conexp}{{\rm coNEXPTIME}} 
\newcommand{\nexp}{{\rm NEXPTIME}} 
\newcommand{\exptime}{{\rm EXPTIME}} 
\newcommand{\expspace}{{\rm EXPSPACE}} 
\newcommand{\coNP}{{\rm coNP}} 
\newcommand{\CQ}{{\sf CQ}} 
\newcommand{\UCQ}{{\sf UCQ}} 
\newcommand{\ptime}{{\rm P}} 
\newcommand{\CRPQ}{{\sf CRPQ}} 
\newcommand{\UCRPQ}{{\sf UCRPQ}} 
\newcommand{\CTW}{{\sf TW}_{\rm crpq}} 
\newcommand{\TW}{{\sf TW}}  
\newcommand{\UTW}{{\sf UTW}}  
\newcommand{\lab}{{\sf label}}
\newcommand{\C}{{\cal C}} 
\renewcommand{\L}{{\cal L}}
\theoremstyle{plain} 
\newtheorem{proposition}{Proposition}
\title{The Complexity of Reverse Engineering Problems for Conjunctive Queries}
\titlerunning{The Complexity of Reverse Engineering Problems for Conjunctive Queries} 
\author{Pablo Barcel\'o}
\author{Miguel Romero}
\affil{Center for Semantic Web Research \& \\ 
Department of Computer Science, 
  University of Chile\\
  \texttt{pbarcelo@dcc.uchile.cl, mromero@dcc.uchile.cl}}
\authorrunning{P. Barcel\'o and M. Romero} 
\subjclass{H.2.3 Database Management - Query Languages}
\keywords{reverse engineering, conjunctive queries, query by example, definability, 
treewidth, complexity of pebble games}
\begin{document}

\maketitle

\begin{abstract}
Reverse engineering problems for conjunctive queries (CQs), such as query by example (QBE) or definability, take a set of user 
examples and convert them into an explanatory CQ. Despite their importance, the complexity
of these problems is prohibitively high (\conexp-complete). We isolate their two main sources of 
complexity and propose relaxations of them that reduce the complexity while having meaningful theoretical 
interpretations. The first relaxation 
is based on the idea of using existential pebble games for approximating homomorphism tests. We show that this 
characterizes QBE/definability for CQs up to treewidth $k$, while reducing the complexity to \exptime. 
As a side result, we obtain that 
the complexity of the QBE/definability problems for CQs of treewidth $k$ is \exptime-complete for each $k \geq 1$. 
The second relaxation is based on the idea of ``desynchronizing'' direct products, which 
characterizes QBE/definability for unions 
of CQs and reduces the complexity to \coNP. 
The combination of these two relaxations yields tractability for QBE and characterizes it 
in terms of unions of CQs of treewidth at most $k$. 
We also study the complexity of these problems for conjunctive regular path queries over 
graph databases, showing them to be no more difficult than for CQs.  
 \end{abstract}

\section{Introduction}

{\em Reverse engineering} is the general problem of abstracting user examples into an explanatory 
query. An important instance of this problem corresponds to {\em query-by-example} (QBE) for a 
query language $\L$. In QBE, the system is presented with a 
database $\D$ and $n$-ary relations $S^+$ and $S^-$  over $\D$ of {\em positive} and {\em negative} 
examples, respectively. The question is whether there exists  
a query $q$ in $\L$ such that its evaluation $q(\D)$ over $\D$ contains all the positive examples (i.e., $S^+ \subseteq q(\D)$)
but none of the negative ones (i.e., $q(\D) \cap S^- = \emptyset$). In case such $q$ exists, it is also 
desirable to return its result $q(\D)$. Another version of this problem assumes that the system is given the set 
$S^+$ of positive examples 
only, and the question is whether there is a query $q$ in $\L$ that precisely defines $S^+$, i.e., $q(\D) = S^+$.  
This is often known as the {\em definability problem} for $\L$. 
As of late, QBE and definability have received quite some attention in different contexts; e.g., 
for first-order logic and the class of conjunctive queries over relational databases 
\cite{ZEPS13,TCP14,LCM15,BCS16,AD16,Willard10,CD15}; for regular path queries over graph databases \cite{ANS13,BCL15};  
for SPARQL queries over RDF \cite{ADK16}; and for tree patterns over XML \cite{CW15,SW15}.  

In data management, a particularly important instance of QBE and definability  
corresponds to the case when 
$\L$ is the class of conjunctive queries (CQs). 
Nevertheless, the relevance of such instance is counterbalanced
 by its inherent complexity: Both QBE and 
definability for CQs are \conexp-complete \cite{Willard10,CD15}. Moreover, in case that a CQ-{\em explanation} 
$q$ for $S^+$ and $S^-$ over $\D$ exists (i.e., a CQ $q$ such that $S^+ \subseteq q(\D)$ and 
$q(\D) \cap S^- = \emptyset$ for QBE), 
it might take double exponential time to compute its result $q(\D)$. 
While several heuristics have been proposed that alleviate this complexity in practice 
\cite{ZEPS13,TCP14,LCM15,BCS16}, up to date there has been (essentially) no theoretical investigation  
identifying the sources of complexity of these problems and proposing principled solutions for them. The 
general objective of this article is to make a first step in such direction.   

A semantic characterization of QBE for CQs has been known for a long time in the community. 
Formally, there exists a CQ $q$ such that 
$S^+ \subseteq q(\D)$ and 
$q(\D) \cap S^- = \emptyset$ (i.e., a CQ-explanation) 
if and only if (essentially) the following {\em QBE test for CQs} 
succeds:

\begin{itemize} 
\item \underline{QBE test for CQs:}
For each 
tuple $\bar b$ in $S^-$ it is the case that 
$\prod_{\bar a \in S^+} (\D,\bar a) \not\to (\D,\bar b)$, i.e.,  $\prod_{\bar a \in S^+} (\D,\bar a)$ does not 
homomorphically map 
to 
$(\D,\bar b)$. 
(Here, $\prod$ denotes the usual direct 
product of databases with distinguished tuples of constants).
\end{itemize}  

 \noindent 
(A similar test characterizes CQ-definability, save that now $\bar b$ is an arbitrary tuple over $\D$ outside $S^+$).  
Moreover, in case there is a CQ-explanation $q$ for $S^+$ and $S^-$ over $\D$, 
then there is a {\em canonical} such explanation
given by the CQ whose body corresponds to $\prod_{\bar a \in S^+} (\D,\bar a)$. As shown by Willard \cite{Willard10}, 
the QBE test for CQs yields optimal bounds for 
determining (a) the existence of a CQ-explanation $q$ for $S^+$ and $S^-$ over $\D$ 
(namely, \conexp), and (b) the size of such $q$ (i.e., exponential). More important,  
it allows to identify the two main sources of complexity of the problem, each one of which increases its complexity by one exponential: 
\begin{enumerate}
\item The construction of the canonical explanation $\prod_{\bar a \in S^+} (\D,\bar a)$, which takes exponential time  
in the combined size of $\D$ and $S^+$. 
\item The homomorphism test 
$\prod_{\bar a \in S^+} (\D,\bar a) \to (\D,\bar b)$ for each tuple $\bar b \in S^-$. 
Since, in general, checking for the existence of a homomorphism is an \np-complete problem, this 
step involves an extra exponential blow up.  
\end{enumerate}  

\smallskip 
\noindent
{\bf Our contributions:} We propose relaxations of the QBE test for CQs that alleviate one or 
both sources of complexity and have  
meaningful theoretical interpretations in terms of the QBE problem (our results also apply to definability). 
They are based on standard approximation notions for the homomorphism test and 
the construction of the direct product $\prod_{\bar a \in S^+} (\D,\bar a)$, as found in the context of constraint 
satisfaction and definability, respectively. 

\begin{enumerate}

\item We start by relaxing the second source of complexity, i.e., 
the one given by the homomorphism tests of the form $\prod_{\bar a \in S^+} (\D,\bar a) \to (\D,\bar b)$, 
for $\bar b \in S^-$. In order to approximate the notion of homomorphism, we use the 
{\em strong consistency tests} often applied in the area of constraint satisfaction \cite{Dechter92}. 
As observed by Kolaitis and Vardi \cite{KV-aaai}, such consistency tests can be recast in terms of 
the {\em existential pebble game} \cite{KV95}, first defined in the context of database theory as 
a tool for studying the expressive power of Datalog, and also used 
to show that CQs of bounded treewidth can be evaluated efficiently \cite{DKV02}. 

As opposed to the homomorphism test, checking for the existence of a 
{\em winning duplicator strategy} 
in the existential $k$-pebble game on $(\D,\bar a)$ and $(\D',\bar b)$, denoted $(\D,\bar a) \to_k (\D',\bar b)$, 
can be solved in polynomial time for each fixed $k > 1$ \cite{KV95}. 
Therefore, replacing 
the homomorphism test 
$\prod_{\bar a \in S^+} (\D,\bar a) \to (\D,\bar b)$ with its ``approximation''   
$\prod_{\bar a \in S^+} (\D,\bar a) \to_{k} (\D,\bar b)$ 
reduces the complexity of the QBE test for CQs to \exptime. Furthermore, this approximation 
has a neat theoretical interpretation: The relaxed version of the 
QBE test accepts the input given by $(\D,S^+,S^-)$ if and only if there is a CQ-explanation
$q$ for $S^+$ and $S^-$ over $\D$ such that $q$ is of {\em treewidth at most $(k-1)$}. 
While the latter is not particularly surprising in light of the strong existing connections between the existential $k$-pebble game and 
the evaluation of CQs of treewidth at most $(k-1)$ \cite{DKV02}, we believe our characterization 
to be of conceptual importance.  

Interestingly, when this relaxed version of the QBE test yields a CQ-explanation $q$ of treewidth at most 
$(k-1)$, its 
result $q(\D)$ can be evaluated in exponential time (recall that for general CQs this might require double exponential time).  

\item We then prove that the previous bound is optimal, i.e., checking whether the relaxed version of the 
QBE test accepts the input given by $(\D,S^+,S^-)$, or, equivalently, if there is a CQ-explanation  
$q$ for $S^+$ and $S^-$ over $\D$ of treewidth at most $k$, for each $k \geq 1$, is 
\exptime-complete. (This also holds for the definability problem for CQs of treewidth at most $k$). 
Intuitively, this states that 
relaxing the second source of complexity of the test by using existential pebble games does not 
eliminate the first one. 

\item Finally, we look at the second source of complexity, i.e., the construction of the exponential size 
canonical explanation $\prod_{\bar a \in S^+} (\D,\bar a)$. While it is not clear which techniques are better suited 
for approximating this construction, we look at a particular one that appears in the context of 
definability: Instead of constructing the synchronized product $\prod_{\bar a \in S^+} (\D,\bar a)$ with respect to all tuples in $S$, 
we look at them one by one. That is, we check whether for each 
tuples $\bar a \in S^+$ and $\bar b \in S^-$ it is the case that $(\D,\bar a) \not\to (\D,\bar b)$. By using a 
characterization developed in the context of definability \cite{ANS13}, we observe that this relaxed version of the QBE test is 
\coNP-complete and has a meaningful interpretation: It corresponds to finding explanations 
based on {\em unions} of CQs. Moreover, when combined with the previous relaxation (i.e., replacing 
the homomorphism test $(\D,\bar a) \to (\D,\bar b)$ with $(\D,\bar a) \to_k (\D,\bar b)$) 
we obtain tractability. This further relaxed test corresponds to  finding explanations over the set of unions of CQs
of treewidth at most $(k-1)$. 

\end{enumerate} 

We then switch to study QBE in the context of graph databases, where 
CQs are often extended with the ability to check whether 
two nodes are linked by a path whose label satisfies a given regular expression. 
This gives rise to the class of {\em conjunctive regular path queries}, or 
CRPQs (see, e.g., \cite{CM90,CGLV00,Woo,Bar13}). 
CRPQ-definability was first studied
by Antonopulos et al. \cite{ANS13}. In particular, it is shown that CRPQ-definability is in \expspace\ 
by exploting automata-based 
techniques, in special, pumping arguments. Our contributions in this context 
are the following: 

\begin{enumerate} 
\item 
We first provide a QBE test for CRPQs
in the spirit of the one
for CQs given above. With such characterization we prove that 
QBE and definability for CRPQs are in \conexp, improving the \expspace\ upper bound of Antonopoulos et al. 
This tells us that these problems are at least not more difficult than for CQs. 
\item 
We also develop relaxations of the QBE test for CRPQs based on the existential pebble game and 
the ``desynchronization'' of the direct product $\prod_{\bar a \in S^+} (\D,\bar a)$. 
As before, we show that they reduce the complexity of the test and have meaningful 
interpretations in terms of the class of queries we use to construct explanations. 
\end{enumerate} 

\medskip
\noindent
{\bf Organization:} Preliminaries are  in Section \ref{sec:prelim}. A review of QBE/definability for CQs is provided in Section \ref{sec:cq}. Relaxations of the homomorphism tests 
are studied in Section \ref{sec:tw} and the desynchronization of the direct product in Section \ref{sec:des}. 
In Section \ref{sec:ext} we consider QBE/definability for CRPQs. Future work is presented in 
Section \ref{sec:conc}. 

\section{Preliminaries} \label{sec:prelim} 

\noindent
{\bf Databases, homomorphisms, and direct products.} 
A {\em schema} is a finite set of relation symbols, each one of which has an associated arity $n > 0$. 
A {\em database} over schema $\sigma$ is a finite 
set of atoms of the form $R(\bar a)$, where $R$ is a relation symbol in $\sigma$ of arity $n > 0$ and 
$\bar a$ is an $n$-ary tuple of constants. We slightly abuse notation, and sometimes write $\D$ 
also for the set of elements mentioned in $\D$. 

Let $\D$ and $\D'$ be databases 
over the same schema $\sigma$. A {\em homomorphism} from $\D$ to $\D'$ 
is a mapping $h$ from the elements of $\D$ to the elements of $\D'$ such that for every atom  
$R(\bar a)$ in $\D$ it is the case that $R(h(\bar a)) \in \D'$. 
We often need to talk about distinguished tuples 
of elements in databases. We then write $(\D,\bar a)$ to define the pair that corresponds to the database $\D$ and 
the tuple $\bar a$ of elements in $\D$. Let $\bar a$ and $\bar b$ be $n$-ary 
($n \geq 0$)
tuples of elements  in $\D$ and $\D'$, respectively. A homomorphism from $(\D,\bar a)$ to $(\D',\bar b)$ 
is a homomorphism from $\D$ to $\D'$ such that $h(\bar a) = \bar b$. We write $(\D,\bar a) \to (\D',\bar b)$ if there is a homomorphism from
$(\D,\bar a)$ to $(\D',\bar b)$. Checking if $(\D,\bar a) \to (\D',\bar b)$ is a well-known \np-complete problem. 

In this work, the notion of {\em direct product} of databases is particularly important. 
Let $\bar a = (a_1,\dots,a_n)$ and $\bar b = (b_1,\dots,b_n)$ be $n$-ary tuples of elements over $A$ and $B$, respectively. 
Their direct product $\bar a \otimes \bar b$ is the $n$-ary tuple 
$((a_1,b_1),\dots,(a_n,b_n))$ over $A \times B$. If $\D$ and $\D'$ are databases over the same schema $\sigma$, 
we define $\D \otimes \D'$ to be the following database over $\sigma$:
$$\{R(\bar a \otimes \bar b) \, \mid \, R \in \sigma, \, R(\bar a) \in \D, \text{ and } R(\bar b) \in \D'\}.$$
Further, we use $(\D,\bar a) \otimes (\D',\bar b)$ to denote the pair $(\D \otimes \D',\bar a \otimes \bar b)$, and 
write $\prod_{1 \leq i \leq m} (\D_i,\bar a_i)$ as a shorthand 
for $(\D_1, \bar  a_1) \otimes \dots \otimes (\D_m,\bar a_m)$. This is allowed since 
$\otimes$ is an associative operation. 

The elements in the tuple $\prod_{1 \leq i \leq m} \bar a_i$ 
may or may not appear in $\prod_{1 \leq i \leq m} \D_i$. If they do appear, we  
call $\prod_{1 \leq i \leq m} (\D_i,\bar a_i)$ {\em safe}. 
The notion of safeness is important in our work for reasons that will become apparent later. 
The next example better explains this notion:

\begin{example} 
If $\D = \{R(a,b),S(c,d)\}$, 
$\bar a_1 = (a,b)$, and $\bar a_2 = (c,d)$, then $(\D,\bar a_1) \otimes (\D,\bar a_2)$ 
is unsafe. In fact, 
$\bar a_1 \otimes \bar a_2 = \big((a,c),(b,d)\big)$ and $\D \otimes \D = \{R((a,a),(b,b)),S((c,c),(d,d))\}$. 
That is, 
none of the elements in $\bar a_1 \otimes \bar a_2$ belongs to $\D \otimes \D$. \qed
\end{example} 

It is worth remarking that the direct product $\otimes$ defines the least upper bound in the lattice of databases defined by the notion of homomorphism. 
In particular: 
\begin{enumerate} 
\item $\prod_{1 \leq i \leq m} (\D_i,\bar a_i) \to (\D_i,\bar a_i)$ for each $1 \leq i \leq m$, and
 \item if $(\D,\bar a) \to (\D_i,\bar a_i)$ for each $1 \leq i \leq m$, then $(\D,\bar a) \to \prod_{1 \leq i \leq m} (\D_i,\bar a_i)$. 
\end{enumerate} 
   
\medskip 
\noindent
{\bf Conjunctive queries.} 
 A {\em conjunctive query} (CQ) $q$  
over relational schema $\sigma$ is an FO formula of the form: 
\begin{equation} \label{eq:cq} 
\exists \bar y \big(R_1(\bar x_1) \wedge \dots \wedge R_m(\bar x_m)\big),
\end{equation} 
such that (a) each $R_i(\bar x_i)$ is an atom over $\sigma$, for $1 \leq i \leq m$, and (b)
$\bar y$ is a sequence of variables taken from the $\bar x_i$'s. 
In order to ensure domain-independence for queries, we only consider CQs without constants. 
We often write $q(\bar x)$ to denote that $\bar x$
is the sequence of {\em free} variables of $q$, i.e., the ones that do not appear existentially quantified in $\bar y$.   

Let $\D$ be a database over $\sigma$. We define the evaluation of a CQ $q(\bar x)$ of the form 
\eqref{eq:cq} over $\D$ in terms of the homomorphisms from $\D_q$ to $\D$, where $\D_q$ 
is the {\em canonical database} of $q$, that is, $\D_q$ is the database 
$\{R_1(\bar x_1),\dots,R_m(\bar x_m)\}$ that contains all atoms in $q$. 
The {\em evaluation of $q(\bar x)$ 
over $\D$}, denoted $q(\D)$, contains exactly those tuples $h(\bar x)$ such that $h$ is a homomorphism 
from $\D_q$ to $\D$. 

\medskip
\noindent
{\bf CQs of bounded treewidth.} The evaluation problem for CQs (i.e., determining whether $q(\D) \neq \emptyset$, given a database $\D$ and a CQ $q$) is \np-complete, but becomes tractable for several syntactically defined classes. One of the most prominent such classes
corresponds to the CQs of {\em bounded treewidth} \cite{CR00}.   
Recall that treewidth is a graph-theoretical concept that measures how much a 
graph resembles a tree (see, e.g., \cite{diestel}). For instance, trees have treewidth one, cycles treewidth two, and 
$K_k$, the clique on $k$ elements, treewidth $k-1$.  

Formally, let $G = (V,E)$ be an undirected graph. 
A {\em tree decomposition} of $G$ is a pair $(T,\lambda)$, where $T$ is a tree and $\lambda$ is a mapping that assigns 
a nonempty set of nodes in $V$ to each node $t$ in $T$, for which the following holds: 
\begin{enumerate}
\item For each $v \in V$ it is the case that the set of nodes $t \in T$ such that $v \in \lambda(t)$ is connected. 
\item For each edge $\{u,v\} \in E$ there exists a node $t \in T$ such that $\{u,v\} \subseteq \lambda(t)$. 
\end{enumerate} 
The {\em width} of $(T,\lambda)$ corresponds to $(\max{\{|\lambda(t)| \mid t \in T\}}) - 1$. The treewidth of $G$ 
is then defined as the minimum width of its tree decompositions. 

We define the treewidth of a CQ $q = \exists \bar y \bigwedge_{1 \leq i \leq m} R_i(\bar x_i)$ 
as the treewidth of the {\em Gaifman graph} of its existentially quantified variables. 
Recall that this is the undirected graph 
whose vertices are the existentially quantified variables of $q$ (i.e., those in $\bar y$) and 
there is an edge between 
distinct existentially quantified variables $y$ and $y'$ if and only they appear together in some atom of $q$, that is, they both appear 
in a tuple $\bar x_i$ for $1 \leq i \leq m$. 
For $k \geq 1$, we denote by $\TW(k)$ the class of CQs of treewidth 
at most $k$. It is known that the evaluation problem for the class $\TW(k)$ (for each fixed $k \geq 1$) 
can be solved in polynomial time \cite{CR00,DKV02}. 

\medskip
\noindent 
{\bf The QBE and definability problems.} Let $\C$ be a class of queries 
(e.g., the class $\CQ$ of all conjunctive queries, or $\TW(k)$ of CQs of treewidth at most $k$). 
Suppose that $\D$ is a database and $S^+$ and $S^-$ are $n$-ary relations over $\D$
of positive and negative examples, respectively. A  {\em $\C$-explanation} for $S^+$ and $S^-$
over $\D$ is a query $q$ in $\C$ such that $S^+ \subseteq q(\D)$ and $q(\D) \cap S^- = \emptyset$. 
Analogously,  a {\em $\C$-definition} of $S^+$ over $\D$ is a query $q$ in $\C$ such that $q(\D) = S^+$. 
The {\em query by example} and {\em definability} problems for $\C$ 
are as follows: 

\begin{center}
\fbox{\begin{tabular}{ll}
\small{PROBLEM} : & {\sc $\C$-query-by-example} (resp., {\sc $\C$-definability})
\\{\small INPUT} : & A database $\D$ and $n$-ary relations $S^+$ and $S^-$ over $\D$ 
\\
& (resp., a database $\D$ and an $n$-ary relation $S^+$ over $\D$) \\ 
{\small QUESTION} : &  Is there a $\C$-explanation for $S^+$ and $S^-$
over $\D$?
\\
& (resp., is there a $\C$-definition of $S^+$ over $\D$?) 
\end{tabular}}
\end{center}

\section{Query by example and definability for CQs} 
\label{sec:cq} 

Let us start by recalling what is known about these problems for CQs. 
We first establish characterizations of the notions of $\CQ$-explanations/definitions 
based on the following tests: 

\begin{itemize} 
\item \underline{QBE test for CQs:} Takes as input a 
database $\D$ and $n$-ary relations $S^+$ and $S^-$ over $\D$. It 
accepts if and only if: 
\begin{enumerate} 
\item 
$\prod_{\bar a \in S^+} (\D,\bar a)$ is safe, and 
\item  
$\prod_{\bar a \in S^+} (\D,\bar a) \not\to (\D,\bar b)$ for each tuple $\bar b \in S^-$. 
\end{enumerate} 
\item \underline{Definability test for CQs:} Takes as input a 
database $\D$ and an $n$-ary relation $S^+$ over $\D$. It 
accepts if and only if:
\begin{enumerate} 
\item 
$\prod_{\bar a \in S^+} (\D,\bar a)$ is safe, and 
\item  
 $\prod_{\bar a \in S^+} (\D,\bar a) \not\to (\D,\bar b)$ for each $n$-ary tuple $\bar b$ over 
$\D$ that is not in $S^+$.
\end{enumerate} 
\end{itemize} 

The following characterizations are considered to be folklore in the literature:  

\begin{proposition} {\em \label{prop:folklore}} The following statements hold:
\begin{enumerate}
\item 
Let $\D$ be a database and $S^+,S^-$ relations over $\D$. There is a $\CQ$-explanation for 
$S^+$ and $S^-$ 
over $\D$ if and only if the QBE test for CQs accepts $\D$, $S^+$, and $S^-$.
\item 
Let $\D$ be a database and $S^+$ a relation over $\D$. There is a \CQ-definition for $S^+$
over $\D$ if and only if the definability test for CQs accepts $\D$ and $S^+$.
\end{enumerate} 
\end{proposition}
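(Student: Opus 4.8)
The plan is to build the proof around the \emph{canonical} CQ associated with the direct product, using the universal property of $\otimes$ recorded in the preliminaries. Write $(P,\bar c) := \prod_{\bar a \in S^+}(\D,\bar a)$, and recall the basic dictionary between CQ evaluation and homomorphisms: for any CQ $q(\bar x)$ and any $n$-ary tuple $\bar b$ over $\D$, we have $\bar b \in q(\D)$ if and only if $(\D_q,\bar x) \to (\D,\bar b)$, where $\D_q$ is the canonical database of $q$. I would prove the first statement by establishing the two implications separately, and then derive the second statement as a special case.

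For the direction ``test accepts $\Rightarrow$ explanation exists'', I would use safeness of $P$ to turn it into a query: since $P$ is safe, every coordinate of $\bar c$ occurs in $P$, so the canonical CQ $q_P(\bar x)$ whose body is $P$ and whose free variables $\bar x$ correspond to $\bar c$ is a well-formed (domain-independent) CQ. It remains to check that $q_P$ is an explanation. For each $\bar a \in S^+$, property~(1) of $\otimes$ yields $(P,\bar c) \to (\D,\bar a)$, hence $\bar a \in q_P(\D)$, giving $S^+ \subseteq q_P(\D)$; and if some $\bar b \in S^-$ satisfied $\bar b \in q_P(\D)$ we would obtain $(P,\bar c) \to (\D,\bar b)$, contradicting condition~(2) of the test. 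For the converse, suppose $q$ is an arbitrary $\CQ$-explanation. From $\bar a \in q(\D)$ for each $\bar a \in S^+$ we get homomorphisms $(\D_q,\bar x) \to (\D,\bar a)$, and property~(2) of $\otimes$ combines them into a single homomorphism $g\colon (\D_q,\bar x) \to (P,\bar c)$ with $g(\bar x)=\bar c$. Condition~(2) of the test then follows by composition: if $(P,\bar c) \to (\D,\bar b)$ for some $\bar b \in S^-$, composing with $g$ would give $(\D_q,\bar x)\to(\D,\bar b)$, i.e.\ $\bar b \in q(\D)$, contradicting $q(\D)\cap S^- = \emptyset$.

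For the second statement I would simply observe that a $\CQ$-definition of $S^+$ is exactly a $\CQ$-explanation for $S^+$ against the negative set consisting of all $n$-ary tuples over $\D$ that lie outside $S^+$: indeed $q(\D)=S^+$ iff $S^+ \subseteq q(\D)$ and $q(\D)$ contains no tuple outside $S^+$. Hence part~(2) follows by applying part~(1) verbatim with this choice of $S^-$, which is precisely what the definability test inspects in its condition~(2).

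The step I expect to be the main obstacle is the safeness claim in the ``$\Rightarrow$'' direction, namely that the \emph{mere existence} of some explanation $q$ forces $P$ to be safe. This is exactly the place where the domain-independence convention on CQs is essential: because every free variable of $q$ occurs in its body, its $g$-image, which is the corresponding coordinate of $\bar c$, must occur in an atom of $P$, so $\bar c$ is safe. Without this convention the canonical query $q_P$ need not be well-formed and the equivalence breaks, so I would make the role of domain independence explicit here. Everything else reduces to the homomorphism characterization of CQ evaluation together with the two universal-property clauses for $\otimes$, and is routine.
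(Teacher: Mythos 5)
Your proof is correct and follows exactly the route the paper intends for this folklore result: the canonical explanation $\prod_{\bar a \in S^+}(\D,\bar a)$ (made into a well-formed CQ by safeness) witnesses one direction, while the universal property of $\otimes$ together with composition of homomorphisms gives the other, with safeness in that direction following because every free variable of a CQ occurs in one of its atoms; reducing definability to QBE against the complement of $S^+$ is likewise the intended argument. No gaps.
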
  

This provides us with a simple method for obtaining  a \conexp\ upper bound for {\sc \CQ-query-by-example} and 
{\sc \CQ-definability}. Let us concentrate on the first problem (a similar argument works for the second one).  
Assume that $S^+$ and $S^-$ are relations of positive and negative examples over a database $\D$.  
It follows from Proposition \ref{prop:folklore} that to check that 
there is {\em not} \CQ-explanation for $S^+$ and $S^-$ over $\D$, we need to 
either show that $\prod_{\bar a \in S^+} (\D,\bar a)$ is unsafe or guess a tuple $\bar b \in S^-$ and a homomorphism 
$h$ from $\prod_{\bar a \in S^+} (\D,\bar a)$ to $(\D,\bar b)$. 
Since $\prod_{\bar a \in S^+} (\D,\bar a)$ is of exponential size, checking its safety can be carried out in exponential time. On the 
other hand, 
the guess of $h$ is also of exponential size, and therefore 
checking that $h$ is indeed a homomorphism from $\prod_{\bar a \in S^+} (\D,\bar a)$ to $(\D,\bar b)$ can be performed in 
exponential time. The whole procedure can then be carried out in \nexp. As it turns out, this bound is also optimal: 

\begin{theorem} {\em \cite{Willard10,CD15}} 
The problems {\sc \CQ-query-by-example} and {\sc \CQ-definability} are \conexp-complete. 
\end{theorem}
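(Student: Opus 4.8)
The plan is to leave the \conexp\ upper bound as already argued above (via the \nexp\ procedure for the complement built on Proposition~\ref{prop:folklore}) and to concentrate on matching \conexp-hardness. I would focus on {\sc \CQ-query-by-example}; the argument for {\sc \CQ-definability} is analogous, since there $S^-$ is implicitly the set of all $n$-ary tuples over $\D$ outside $S^+$, and the gadgets below can be arranged so that a single designated negative tuple does all the work.

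By Proposition~\ref{prop:folklore}, there is \emph{no} \CQ-explanation exactly when either $\prod_{\bar a \in S^+}(\D,\bar a)$ is unsafe or there is a homomorphism $\prod_{\bar a \in S^+}(\D,\bar a) \to (\D,\bar b)$ for some $\bar b \in S^-$. Hence it suffices to show that the \emph{existence} of such a homomorphism is \nexp-hard, and I would reduce from a standard \nexp-complete exponential tiling problem: given a set of tile types with horizontal and vertical compatibility relations, an initial-row condition, and a parameter $n$, decide whether the $2^n \times 2^n$ grid admits a compatible tiling. Since the complement of this tiling problem is \conexp-complete, a reduction in which a \CQ-explanation exists iff the grid is \emph{not} tileable yields \conexp-hardness of {\sc \CQ-query-by-example}.

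The core observation driving the reduction is that the direct power manufactures exponentially many elements for free: taking $S^+ = \{\bar a_1,\dots,\bar a_m\}$ with $m$ polynomial in $n$ and $\D$ of polynomial size, every element of $\prod_{\bar a \in S^+}(\D,\bar a)$ is an $m$-tuple of elements of $\D$, so the product has up to $|\D|^m$ elements. I would use $m = 2n$ coordinates drawn from a two-element ``bit'' domain $\{0,1\} \subseteq \D$, so that the bit-tuples of the product range over $\{0,1\}^{2n}$ and thereby index the cells of the $2^n \times 2^n$ grid (the first $n$ bits naming the row, the last $n$ the column). The target $(\D,\bar b)$ is designed as the ``tile structure'': its elements are the tile types and it carries relations mirroring horizontal and vertical compatibility, so that a homomorphism from the product sends each grid cell to a tile and is forced, by the relations present in the product, to respect adjacency --- i.e.\ it is precisely a valid tiling.

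The main obstacle, and the bulk of the construction, is engineering $\D$ so that the direct product contains the right adjacency edges: one must encode a binary-successor relation on $n$-bit addresses (with carry propagation) inside $\D$ in such a way that it survives the coordinatewise product and links cell $(i,j)$ to $(i{+}1,j)$ and $(i,j{+}1)$ exactly when the addresses differ by one. This is the delicate point, since $\otimes$ acts coordinatewise and naive counter gadgets tend either to collapse or to produce spurious edges. Two further routine-but-necessary pieces are (i)~enforcing the safety condition, by arranging the distinguished tuples $\bar a_i$ so that $\prod_{\bar a \in S^+}(\D,\bar a)$ is safe and non-existence of an explanation is governed solely by the homomorphism clause, and (ii)~wiring in the initial-row condition through an auxiliary relation in $\D$ that pins down the tiles along address $0$. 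Once these gadgets are in place, a homomorphism to $(\D,\bar b)$ exists iff the grid is tileable, so a \CQ-explanation exists iff the grid is not tileable, completing the hardness reduction.
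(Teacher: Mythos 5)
Your upper bound coincides with the paper's: the theorem is presented there as a cited result, with \conexp\ membership following from exactly the \nexp\ procedure for the complement built on Proposition~\ref{prop:folklore}, so there is nothing to add on that side. For the lower bound you are following Willard's original route (a reduction from the complement of an exponential tiling problem), whereas the proof the paper actually leans on, due to ten Cate and Dalmau, factors the argument through an intermediate \emph{product homomorphism problem}: first show that deciding $\prod_{1\leq i\leq m}\D_i \to \D$ is \nexp-hard, then give an easy polynomial-time reduction from that problem to the failure of the QBE/definability test. That modularization is what the paper later reuses for its \exptime-hardness results in Section~\ref{sec:comp}, so it buys more than the direct tiling encoding does.

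As it stands, though, your proposal has a genuine gap: the two pieces you defer are exactly where all the difficulty lives. First, the successor gadget. You correctly note that $\otimes$ acts coordinatewise and that naive counters ``collapse or produce spurious edges,'' but you do not exhibit a gadget that avoids this; without it there is no reduction, since the entire point is to force the product to contain the horizontal and vertical adjacency edges of the $2^n\times 2^n$ grid and nothing else. Second, and more structurally, in the QBE test the factors of the product and the target of the homomorphism are the \emph{same} database $\D$ with different distinguished tuples. Your sketch designs the ``grid-encoding'' factors and the ``tile structure'' target as if they were independent; in fact $\D$ must contain both, and you must (a) prevent the tile component from contaminating the product---otherwise $\prod_{\bar a\in S^+}(\D,\bar a)$ may map into $(\D,\bar b)$ through the tile part regardless of tileability---and (b) force every homomorphism to send the grid part of the product into the tile part of $(\D,\bar b)$. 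This is precisely the content of the ``easy reduction'' step in ten Cate--Dalmau (handled by padding and disjoint-union tricks) and it is not automatic. Until both gadgets are supplied, the hardness direction is a plan rather than a proof.
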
 

The lower bound for {\sc CQ-definability} was established by 
Willard using a complicated reduction from the complement of a tiling problem. 
A simpler proof was then obtained by ten Cate and Dalmau \cite{CD15}.
Their techniques also establish a lower bound for {\sc \CQ-query-by-example}. 
Notably, these lower bounds hold even when $S^+$ and $S^-$ are unary relations. 

\medskip
\noindent
{\bf The cost of evaluating CQ-explanations.} Recall that in query by example not only we want to find a 
$\CQ$-explanation $q$ for $S^+$ and $S^-$ over $\D$, but also compute its result $q(\D)$ if possible. 
It follows from the proof of Proposition \ref{prop:folklore} that in case there is a $\CQ$-explanation for 
$S^+$ and $S^-$ over $\D$, then we can assume such CQ to be $\prod_{\bar a \in S^+} (\D,\bar a)$, i.e., 
the CQ whose set of atoms is 
$\D^{|S^+|}$ and whose tuple of 
free variables is $\prod_{\bar a \in S^+} \bar a$ (notice that we are using here 
the assumption that $\prod_{\bar a \in S^+} (\D,\bar a)$ is safe, 
i.e., that the free variables in $\prod_{\bar a \in S^+} \bar a$ do in fact appear in the atoms in 
$\D^{|S^+|}$). 
The CQ $\prod_{\bar a \in S^+} (\D,\bar a)$ is known as the {\em canonical} $\CQ$-explanation. 
We could then simply evaluate this canonical \CQ-explanation 
over $\D$ in order to meet the requirements of query by example. 
This, however, takes double exponential time since 
$\prod_{\bar a \in S^+} (\D,\bar a)$ itself is of exponential size. 
It is not known whether there are better algorithms 
for computing the 
result of {\em some} \CQ-explanation, but the results in this section suggest that this is 
unlikely. 

\medskip
\noindent
{\bf Size of CQ explanations and definitions.} 
It follows from the previous observations that $\CQ$-explanations are of at most
exponential size (by taking the canonical $\CQ$-explanation as witness). 
The same holds for $\CQ$-definitions. Interestingly, these bounds are optimal: 

  \begin{proposition} \label{prop:size-cq} {\em \cite{Willard10,CD15}} The following statements hold: 
\begin{enumerate} 
\item 
If there is a $\CQ$-explanation for 
$S^+$ and $S^-$ over $\D$, then there is a $\CQ$-explanation of at most exponential size; namely, 
$\prod_{\bar a \in S^+} (\D,\bar a)$. 
Similarly, for $\CQ$-definitions.  
\item There is a family $(\D_n,S^+_n,S^-_n)_{n \geq 0}$ of tuples of databases $\D_n$ and relations $S^+_n$ and $S^-_n$ over $\D_n$,  
such that (a) the combined size of $\D_n$, $S^+_n$, and $S^-_n$ is polynomial in $n$, 
(b) there is a $\CQ$-explanation for $S^+_n$ and $S^-_n$ over $\D_n$, and 
(c) the size of the smallest such $\CQ$-explanation is at least $2^{n}$. Similarly, for $\CQ$-definitions.
\end{enumerate}
\end{proposition}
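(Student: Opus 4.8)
\textbf{Part 1} is immediate from Proposition~\ref{prop:folklore}. If any $\CQ$-explanation for $S^+,S^-$ over $\D$ exists, then the QBE test accepts, so $\prod_{\bar a \in S^+}(\D,\bar a)$ is safe and maps homomorphically to no negative example; hence the canonical query $\prod_{\bar a \in S^+}(\D,\bar a)$ is itself a $\CQ$-explanation. As a database built from $|S^+|$ copies of $\D$ it has at most $|\D|^{|S^+|}$ atoms, which is exponential in the combined input size. The definability case is identical, using the definability test.

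\textbf{Part 2.} The plan is to exhibit an explicit family over the schema of a single binary relation $E$ (i.e.\ directed graphs), exploiting that the direct product of short pairwise-coprime cycles is a single exponentially long cycle. Let $p_1 < \dots < p_n$ be the first $n$ \emph{odd} primes and set $N = \prod_i p_i$, so $N$ is odd and $N \geq 3^n > 2^n$ while $\sum_i p_i$ is polynomial in $n$. Let $\D_n$ be the disjoint union of the directed cycles $C_{p_1},\dots,C_{p_n}$ together with a directed $2$-cycle (digon) $C_2$; fix a vertex $a_i$ in each $C_{p_i}$ and a vertex $b$ in $C_2$. Take $S^+_n = \{a_1,\dots,a_n\}$ and $S^-_n = \{b\}$, all relations unary. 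This settles part~(a), and part~(b) follows by checking directly that the cycle query $C_N$ (one free variable) is an explanation: $C_N \to (\D_n,a_i)$ for every $i$ since $p_i \mid N$, while $C_N \not\to (\D_n,b)$ because $N$ is odd and so $C_N$ does not map into $C_2$.

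The crux is part~(c): bounding the size of \emph{every} separating $\CQ$. First I would reduce to connected canonical databases $\D_q$: the free variable lies in one component $\D_q^x$, and since a non-empty evaluation forces every other component to map into $\D_n$, the restriction to $\D_q^x$ is again an explanation of no larger size. For connected $\D_q$, the requirement $S^+_n \subseteq q(\D_n)$ says $(\D_q,x) \to (\D_n,a_i)$ for all $i$, equivalently (by the product property of Section~\ref{sec:prelim} together with coprimality of the $p_i$) $(\D_q,x)$ maps into the marked component of $\prod_i(\D_n,a_i)$, which is exactly $C_N$; concretely, $\D_q$ admits a level function, with values taken modulo $N$, that increases by $1$ along every edge. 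The requirement $q(\D_n)\cap S^-_n=\emptyset$ says $(\D_q,x)\not\to(\D_n,b)$, i.e.\ $\D_q$ does not map into the digon $C_2$, which holds precisely when the underlying undirected graph of $\D_q$ is \emph{not} bipartite and so contains an odd cycle. Summing the $\pm 1$ increments of the level function around such an odd cycle yields an odd integer that must be divisible by $N$; being odd it is non-zero, so its absolute value, and hence the length of the cycle, is at least $N$. Thus $\D_q$ has at least $N > 2^n$ vertices. The definability variant reuses the same $\D_n$, taking $S^+_n$ to be all vertices lying on the cycles $C_{p_1},\dots,C_{p_n}$ (whose complement is the two digon vertices): one verifies that $C_N$ defines $S^+_n$ exactly, and the identical odd-cycle argument lower-bounds every defining $\CQ$.

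The main obstacle is exactly step~(c): showing that the positive constraint (a level function modulo the exponential $N$) and the negative constraint (non-bipartiteness) are jointly satisfiable only by structures that carry an odd cycle of length at least $N$. Getting the modular arithmetic and the ``product-equals-$C_N$'' identity right, and cleanly reducing disconnected canonical databases to connected ones, are where the care lies; the upper bound of part~(1) and the polynomiality of the instances are routine.
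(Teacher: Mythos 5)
Your proof is correct. Note that the paper itself does not prove this proposition: part (1) is observed in the surrounding text exactly as you argue (the canonical explanation $\prod_{\bar a \in S^+}(\D,\bar a)$ has at most $|\D|^{|S^+|}$ atoms), and part (2) is attributed to \cite{Willard10,CD15}. Your lower-bound construction --- disjoint directed cycles of pairwise coprime prime lengths, whose rooted product component is a single directed cycle of exponential length $N$, together with a digon that forces any separating query to be non-bipartite --- is essentially the classical witness for this phenomenon, and the steps check out: homomorphic images of connected graphs are connected, so the negative condition localizes to the digon component and is equivalent to non-bipartiteness of $\D_q$; the positive condition places the component $\D_q^x$ of the free variable inside the component of $(a_1,\dots,a_n)$ in $\prod_i(\D_n,\bar a_i)$, which by the Chinese Remainder Theorem is exactly $C_N$, yielding the level function modulo $N$; and summing the $\pm 1$ increments around an odd cycle gives an odd integer divisible by $N$, hence a cycle on at least $N \geq 3^n$ distinct vertices. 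The reduction to connected canonical databases is also handled correctly, since $S^+_n \neq \emptyset$ forces $q(\D_n) = q^x(\D_n)$, so both the positive and the negative conditions transfer to the restricted query. The only point worth making explicit is that ``size'' survives either reading: a connected $\D_q^x$ on at least $N$ vertices has at least $N-1 \geq 2^n$ atoms, so the bound holds whether size is counted in variables or in atoms.
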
  

\medskip
\noindent
{\bf Sources of complexity.} 
The QBE test performs the following steps on input $(\D,S^+,S^-)$: 
(1) It computes $\prod_{\bar a \in S^+} (\D,\bar a)$, and (2) it checks whether 
$\prod_{\bar a \in S^+} (\D,\bar a)$ is unsafe or it is the case that 
$\prod_{\bar a \in S^+} (\D,\bar a)$ $\to (\D,\bar b)$ for some $\bar b \in S^-$. 
The definability test is equivalent, but the homomorphism test is then extended to each tuple over $\D$ but
outside $S^+$. Two sources of complexity are involved in these tests, each 
one of which incurs in one exponential blow up: (a) The construction of $\prod_{\bar a \in S^+} (\D,\bar a)$, and (b) 
the homomorphism tests  $\prod_{\bar a \in S^+} (\D,\bar a) \to (\D,\bar b)$. In order to alleviate the 
high complexity of the tests we thus propose relaxations of these two sources of complexity. 
The proposed relaxations are based on well-studied approximation 
notions with strong theoretical support. 
As such, they give rise to clean reformulations 
of the notions of 
$\CQ$-explanations/definitions. We start with the homomorphism test in the following section. 

\section{A relaxation of the homomorphism test} 
\label{sec:tw} 

We use an approximation technique for the homomorphism test  
based on the existential pebble game. This technique finds several 
applications in database theory \cite{KV95,DKV02} and can be shown to be equivalent
to the strong consistency tests for homomorphism approximation used in the area of constraint satisfaction \cite{KV-aaai}.
 The complexity of the (existential) pebble game is by now well-understood \cite{Grohe99,KP03}. 
We borrow several techniques used in such analysis to understand the 
complexity of our problems. We also prove some results on the complexity of such games that are of independent
interest. We define the existential pebble game below. 

\medskip
\noindent
{\bf The existential pebble game.} 
Let $k > 1$. The existential $k$-pebble game is played by the spoiler and the duplicator 
on pairs $(\D,\bar a)$ and $(\D',\bar b)$, where $\D$ and $\D'$ are databases over the same schema 
and $\bar a$ and $\bar b$ are $n$-ary ($n \geq 0$) tuples over $\D$ and $\D'$, respectively. 
The spoiler plays on $\D$ only, and the duplicator responds on $\D'$. 
In the first round the spoiler places his pebbles $\tt p_1,\dots,p_k$ on (not necessarily distinct) 
elements $c_1,\dots,c_k$ in $\D$, and the duplicator 
responds by placing his pebbles $\tt q_1,\dots,q_k$ on elements $d_1,\dots,d_k$ in $\D'$. In 
every further round, the spoiler removes one of his pebbles, say $\tt p_i$, for $1 \leq i \leq k$, 
and places it on an element of $\D$, 
and the duplicator responds by placing his corresponding pebble $\tt q_i$ on some element of 
$\D'$. The duplicator wins if he has a 
{\em winning strategy}, i.e., he can indefinitely continue playing the game in such way that 
at each round, if $c_1,\dots,c_k$ and $d_1,\dots,d_k$ are the elements covered by pebbles $\tt p_1,\dots,p_k$ and 
$\tt q_1,\dots,q_k$ on $\D$ and $\D'$, respectively, then 
$$\big((c_1,\dots,c_k,\bar a),(d_1,\dots,d_k,\bar b)\big)$$ 
is a {\em partial homomorphism} from $\D$ to $\D'$. 
Recall that this means that for every atom of the form $R(\bar c) \in \D$, where each element $c$ of $\bar c$ 
appears in $(c_1,\dots,c_k,\bar a)$, 
it is the case that $R(\bar d) \in \D'$, where $\bar d$ is the tuple that is obtained from $\bar c$ 
by replacing each element $c$ of 
$\bar c$ by its corresponding element $d$ in $(d_1,\dots,d_k,\bar b)$. 
If such winning strategy for the duplicator exists, we write $(\D,\bar a) \to_k (\D',\bar b)$. 

It is easy to see that the relations $\to_k$, for $k > 1$, 
provide an approximation of the notion of homomorphism in the following sense: 
$$\to \ \subsetneq \ \dots \ 
\subsetneq \ \to_{k+1} \ \subsetneq \ \to_k  \ \subsetneq \ \dots  \subsetneq \ \to_{2}.$$ 
Furthermore, these approximations are convenient from a complexity point of view: 
While checking for the existence of a homomorphism from $(\D,\bar a)$ to $(\D',\bar b)$ is \np-complete, 
checking for the existence of a winning strategy for the duplicator in the existential $k$-pebble game can be solved 
efficiently: 

\begin{proposition} \label{prop:games-poly} {\em \cite{KV95}} 
Fix $k > 1$. Checking if $(\D,\bar a) \to_k (\D',\bar b)$, given databases $\D$ and $\D'$ and 
$n$-ary tuples $\bar a$ and $\bar b$ over $\D$ and $\D'$, respectively, 
can be solved in polynomial time.  
\end{proposition}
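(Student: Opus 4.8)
The plan is to follow the classical route of Kolaitis and Vardi: reduce the existence of a winning duplicator strategy to a greatest fixed-point computation over a universe of partial homomorphisms that is of polynomial size once $k$ is fixed. The first step is to replace the dynamic, potentially infinite game by a static combinatorial witness. Call a nonempty family $\HH$ of partial homomorphisms from $\D$ to $\D'$ a \emph{winning family} if every $f \in \HH$ extends $\bar a \mapsto \bar b$ to a partial homomorphism and uses at most $k$ pebble elements in its domain, and moreover $\HH$ satisfies two closure conditions: (i) it is closed under restriction, i.e.\ if $f \in \HH$ and $g \subseteq f$ then $g \in \HH$; and (ii) it has the \emph{forth property}, i.e.\ for every $f \in \HH$ using fewer than $k$ pebbles and every element $c \in \D$, there is some $g \in \HH$ with $f \subseteq g$ and $c \in \mathrm{dom}(g)$. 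I would prove that $(\D,\bar a) \to_k (\D',\bar b)$ holds if and only if such a winning family exists, equivalently if and only if the base map $\bar a \mapsto \bar b$ belongs to one.

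The correctness of this characterization splits into two directions. From a winning strategy to a family, I would collect the set of all partial homomorphisms that occur as positions in plays consistent with the strategy; closure under restriction mirrors the spoiler removing a pebble, and the forth property mirrors the spoiler's freedom to place a pebble on an arbitrary $c \in \D$, so both conditions follow directly from the game rules. For the converse, a winning family supplies the duplicator with a strategy: he maintains the invariant that the current configuration lies in $\HH$, using closure under restriction to absorb a pebble removal and the forth property to answer each new pebble placement, so he can continue indefinitely.

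Given the characterization, the algorithm is a monotone shrinking (greatest fixed-point) computation. Let $U$ be the set of all partial homomorphisms from $\D$ to $\D'$ that extend $\bar a \mapsto \bar b$ and use at most $k$ pebbles; for fixed $k$ its size is at most $\sum_{i \le k} \binom{|\D|}{i}\,|\D'|^{i}$, which is polynomial in $|\D| + |\D'|$. Starting from $U$, repeatedly delete any $f$ that violates the forth property with respect to the current set (some $c \in \D$ has no admissible extension inside it), together with whatever must go to restore closure under restriction, until no further deletion is possible; the surviving set is the largest winning family inside $U$. One then accepts iff $\bar a \mapsto \bar b$ survives. Since every sweep either deletes at least one element of $U$ or halts, there are at most $|U|$ sweeps, and each sweep tests every $f \in U$ against every $c \in \D$ and every candidate extension in polynomial time, giving an overall polynomial running time.

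I expect the main obstacle to be the characterization lemma itself, namely the proof that a static winning family exactly captures the dynamic existence of a winning duplicator strategy, with the bookkeeping of the distinguished tuples $\bar a, \bar b$ and of the $k$-pebble budget carried out precisely in both directions. Once that equivalence is established, the algorithmic and complexity parts are a routine greatest-fixed-point argument.
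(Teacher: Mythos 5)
Your proposal is correct and follows essentially the same route as the source the paper cites for this result (Kolaitis and Vardi): characterizing the duplicator's winning strategy as a nonempty family of partial homomorphisms of size at most $k$ extending $\bar a \mapsto \bar b$, closed under restriction and satisfying the forth property, and then computing the greatest such family by iterated deletion over a universe whose size is polynomial for fixed $k$. The paper itself offers no proof beyond the citation, and your argument, including the equivalence lemma and the fixed-point computation, is the standard one.
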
 

Furthermore, there is an interesting connection between the existential pebble game and the evaluation of 
CQs of bounded treewidth as established in the following proposition: 

\begin{proposition} \label{prop:games-tw} {\em \cite{AKV04}} 
Fix $k \geq 1$. Consider databases $\D$ and $\D'$ over the same schema and 
 $n$-ary tuples $\bar a$ and $\bar b$  over $\D$ and $\D'$, respectively. Then 
 $(\D,\bar a) \to_{k+1} (\D',\bar b)$ if and only if for each CQ $q(\bar x)$ in $\TW(k)$ such that 
$|\bar x| = n$ the following holds: $$\bar a \in q(\D) \ \ \Longrightarrow \ \ \bar b \in q(\D'),$$ 
or, equivalently, $(\D_q,\bar x) \to (\D,\bar a)$ implies $(\D_q,\bar x) \to (\D',\bar b)$, where as before $\D_q$ 
is the database that contains all the atoms of $q$.

Moreover, in case that $(\D,\bar a) \not\to_{k+1} (\D',\bar b)$ there exists an exponential size 
CQ $q(\bar x)$ in $\TW(k)$ such that $\bar a \in q(\D)$ but $\bar b \not\in q(\D')$.
\end{proposition}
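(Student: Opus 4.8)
The plan is to prove the two implications of the biconditional separately and then extract the ``moreover'' part from the construction used in the harder (right-to-left) direction. First note that the two phrasings are literally the same statement: by definition $q(\D)$ is the set of images $h(\bar x)$ of homomorphisms $h\colon \D_q \to \D$, so ``$\bar a \in q(\D)$'' means ``$(\D_q,\bar x)\to(\D,\bar a)$'', and similarly for $\D'$. Throughout I rely on the fact that treewidth $\le k$ means every bag of some tree decomposition of the Gaifman graph of the existential variables has at most $k+1$ elements, which is exactly the number of pebbles available.

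For the forward direction, assume $(\D,\bar a)\to_{k+1}(\D',\bar b)$ and fix a CQ $q(\bar x)$ in $\TW(k)$ with $|\bar x| = n$ and $\bar a \in q(\D)$, witnessed by a homomorphism $h\colon \D_q \to \D$ with $h(\bar x)=\bar a$. It is convenient to record the duplicator's winning strategy as a nonempty family $\mathcal{H}$ of partial homomorphisms from $\D$ to $\D'$ with at most $k+1$ elements in their domains, closed under restriction and satisfying the forth property (any $p\in\mathcal{H}$ with at most $k$ elements in its domain extends within $\mathcal{H}$ to cover a prescribed element of $\D$); such a family exists exactly when the duplicator wins. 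Taking a width-$k$ tree decomposition $(T,\lambda)$ of the Gaifman graph of the existential variables of $q$, rooting it, and processing bags top-down, I would build the target homomorphism $g$ into $\D'$ one bag at a time: the overlap of a bag with the already-processed variables sits inside its parent bag, so by connectedness (condition 1) the values chosen so far restrict to a member of $\mathcal{H}$, which the forth property then extends to the fresh variables of the bag. Setting $g(\bar x)=\bar b$ (valid because $\bar a\mapsto\bar b$ is fixed in every position of the game), and using that the variables of each atom form a clique and hence lie in a single bag (condition 2), the partial-homomorphism condition carried by $\mathcal{H}$ preserves every atom; thus $g\colon\D_q\to\D'$ with $g(\bar x)=\bar b$, i.e.\ $\bar b\in q(\D')$.

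For the converse together with the ``moreover'' statement, assume $(\D,\bar a)\not\to_{k+1}(\D',\bar b)$ and construct a witnessing CQ. I would use the greatest-fixed-point description of the duplicator's winning positions: the family $\mathcal{H}$ above is the largest downward-closed family with the forth property, computed as a decreasing sequence $\mathcal{H}_0\supseteq\mathcal{H}_1\supseteq\cdots$ that stabilises after a number of stages bounded by the number of configurations (hence at most exponential). Since the duplicator loses, the initial position is eliminated at a finite stage, yielding a ranked winning strategy for the spoiler. From this strategy I would build $q$ by unfolding: introduce a fresh existential variable for each pebbled element along each branch of the spoiler's strategy tree, add the atoms of $\D$ induced on the currently pebbled elements, and let the free variables $\bar x$ be the copies of $\bar a$. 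Because at most $k+1$ elements are pebbled at once, taking the successive pebble configurations as bags gives a tree decomposition of width $\le k$, so $q\in\TW(k)$; collapsing the copies back onto $\D$ witnesses $\bar a\in q(\D)$, and the depth and branching of the strategy tree bound $|q|$ exponentially.

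The main obstacle is this last construction, and specifically proving $\bar b\notin q(\D')$: I must show that any homomorphism $g\colon\D_q\to(\D',\bar b)$ could be read, branch by branch, as a coherent system of duplicator responses forming a winning strategy in the existential $(k+1)$-pebble game, contradicting the fact that the spoiler wins. Making this exact requires engineering the unfolding so that the bag structure mirrors the pebble dynamics (one pebble moved per tree edge, connectedness encoding which pebbles persist), so that ``homomorphism from the tree-like $q$'' and ``duplicator strategy'' become two descriptions of the same object. By comparison, the forward direction is routine once the clique and connectedness observations about tree decompositions are in place.
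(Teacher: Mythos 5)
The paper does not prove this proposition at all --- it is imported verbatim from \cite{AKV04} --- so there is no in-paper argument to compare against; your proposal is a correct sketch of the standard proof from that reference (winning strategies as restriction-closed families of partial homomorphisms with the forth property for the left-to-right direction, and unfolding a ranked spoiler strategy into a tree-shaped CQ for the converse and the size bound). The only nitpick is that ``every atom's existential variables lie in a single bag'' uses the clique-containment property of tree decompositions, which needs both the connectedness and the edge-covering conditions, not condition~2 alone.
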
 

\medskip
\noindent
{\bf The relaxed test.} 
We study the following relaxed version of the QBE test for CQs 
that replaces the notion of homomoprhism $\to$ with its approximation 
$\to_k$, for a fixed $k > 1$: 

\begin{itemize} 
\item \underline{$k$-pebble QBE test for CQs:} Takes as input a 
database $\D$ and $n$-ary relations $S^+$ and $S^-$ over $\D$. It 
accepts if and only if:
\begin{enumerate} 
\item 
$\prod_{\bar a \in S^+} (\D,\bar a)$ is safe, and 
\item 
 $\prod_{\bar a \in S^+} (\D,\bar a) \not\to_k (\D,\bar b)$ for each tuple $\bar b \in S^-$. 
\end{enumerate} 
\end{itemize} 

Analogously, we define the $k$-pebble definability test for CQs. It immediately follows from 
the fact that the relation $\to_k$ can be decided in 
polynomial time (Proposition \ref{prop:games-poly}) that the $k$-pebble tests for CQs reduce the complexity of the general test from 
\conexp\ 
to \exptime. Later, in Section \ref{sec:comp}, we show that this is optimal.  

\subsection{A characterization of the $k$-pebble tests for CQs}
 
Using Proposition \ref{prop:games-tw} we can now establish the theoretical 
meaningfulness of the relaxed tests: They admit a clean characterization in terms of the 
CQs of bounded treewidth. 
In fact, recall that the QBE (resp., definability) 
test for CQs precisely characterizes the existence of \CQ-explanations (resp., \CQ-definitions). As we 
show next, their relaxed versions based on the existential $(k+1)$-pebble game preserve these characterizations
up to treewidth $k$:   

\begin{theorem} \label{theo:games} 
Fix $k \geq 1$. 
Consider a database $\D$ and $n$-ary 
relations $S^+$ and $S^-$ over $\D$. 
\begin{enumerate} 
\item 
There is a $\TW(k)$-explanation for $S^+$ and $S^-$ over $\D$ if and only if
the $(k+1)$-pebble QBE test for CQs accepts $\D$, $S^+$, and $S^-$.
\item
 There is a $\TW(k)$-definition  
for $S^+$ over $\D$ if and only if
the $(k+1)$-pebble definability test for CQs 
accepts $\D$ and $S^+$.
\end{enumerate}   
\end{theorem}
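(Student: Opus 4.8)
The plan is to prove both directions of each equivalence by translating between the language of $\TW(k)$-explanations/definitions and the language of the existential $(k+1)$-pebble game, with Proposition~\ref{prop:games-tw} as the bridge. Throughout I write $P = \prod_{\bar a \in S^+}(\D,\bar a)$, so that the database underlying $P$ is $\D^{|S^+|}$ and its distinguished tuple is $\prod_{\bar a \in S^+}\bar a$. The single observation that drives everything is the following reformulation of ``$q$ covers $S^+$'': for a CQ $q(\bar x)$ with $|\bar x| = n$ we have $S^+ \subseteq q(\D)$ if and only if $(\D_q,\bar x) \to (\D,\bar a)$ for every $\bar a \in S^+$, which by the universal property of the direct product (items (1)--(2) of the least-upper-bound remark) is equivalent to $(\D_q,\bar x) \to P$, i.e.\ to $\prod_{\bar a \in S^+}\bar a \in q(\D^{|S^+|})$. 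This is exactly the hypothesis that Proposition~\ref{prop:games-tw} consumes, once instantiated with $(\D^{|S^+|},\prod_{\bar a\in S^+}\bar a)$ on the left and $(\D,\bar b)$ on the right.

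For the forward direction of Part~1, suppose $q(\bar x) \in \TW(k)$ is a $\TW(k)$-explanation. I would first establish safeness of $P$: for each $\bar a \in S^+$ fix a homomorphism $h_{\bar a}\colon(\D_q,\bar x)\to(\D,\bar a)$ and take their product $h = \prod_{\bar a} h_{\bar a}\colon(\D_q,\bar x)\to P$; since every free variable of a CQ in this syntax occurs in some atom, the image $h(\bar x) = \prod_{\bar a}\bar a$ occurs in the atoms of $\D^{|S^+|}$, which is precisely safeness. For condition~(2), suppose towards a contradiction that $P \to_{k+1}(\D,\bar b)$ for some $\bar b \in S^-$. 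Applying the ``only if'' direction of Proposition~\ref{prop:games-tw} to the witness $q$, and using that $\prod_{\bar a}\bar a \in q(\D^{|S^+|})$ by the bridge observation, we would get $\bar b \in q(\D)$, contradicting $q(\D)\cap S^- = \emptyset$. Hence the $(k+1)$-pebble QBE test accepts.

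For the backward direction of Part~1, assume the test accepts, so $P$ is safe and $P \not\to_{k+1}(\D,\bar b)$ for each $\bar b \in S^-$. For every such $\bar b$, the ``moreover'' clause of Proposition~\ref{prop:games-tw} yields a CQ $q_{\bar b}(\bar x) \in \TW(k)$ with $\prod_{\bar a}\bar a \in q_{\bar b}(\D^{|S^+|})$ but $\bar b \notin q_{\bar b}(\D)$. The key step is to assemble these into a single query: let $q = \bigwedge_{\bar b \in S^-} q_{\bar b}$, taking the conjunction over a common tuple $\bar x$ of free variables while renaming the existential variables of the different $q_{\bar b}$ apart. I would then argue (i) $q \in \TW(k)$, because existential variables of distinct conjuncts never share an atom, so the Gaifman graph of the existential variables of $q$ is the disjoint union of those of the $q_{\bar b}$, whose treewidth is the maximum of the individual ones, hence at most $k$; and (ii) $q(\D) = \bigcap_{\bar b} q_{\bar b}(\D)$, since a homomorphism from $\D_q$ restricts to, and glues from, homomorphisms on the conjuncts that agree on $\bar x$. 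From (ii): each $\bar a \in S^+$ lies in every $q_{\bar b}(\D)$ (compose $(\D_{q_{\bar b}},\bar x)\to P \to(\D,\bar a)$, the last arrow being the projection onto the $\bar a$-factor), so $S^+ \subseteq q(\D)$; and each $\bar b \in S^-$ is missing from $q_{\bar b}(\D) \supseteq q(\D)$, so $q(\D)\cap S^- = \emptyset$. Thus $q$ is a $\TW(k)$-explanation.

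Part~2 (definability) is entirely analogous: one repeats the argument with $\bar b$ ranging over the $n$-ary tuples over $\D$ outside $S^+$ instead of over $S^-$, the conjunction in the backward direction ranging over all such $\bar b$, and concludes $q(\D) = S^+$ from $S^+ \subseteq q(\D) = \bigcap_{\bar b \notin S^+} q_{\bar b}(\D)$ together with the fact that any $\bar c \in q(\D)\setminus S^+$ would be excluded by its own conjunct $q_{\bar c}$. The step I expect to be most delicate is the treewidth bookkeeping in the backward direction: verifying that conjoining (possibly exponentially or doubly-exponentially many) witnesses with disjointly renamed existential variables genuinely keeps the Gaifman graph of the existential variables a disjoint union, so that the treewidth does not grow; the size blow-up is irrelevant since we only assert existence. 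The remaining care is the reliance on safeness, which is exactly what makes $\prod_{\bar a}\bar a$ a legitimate tuple over $\D^{|S^+|}$ so that Proposition~\ref{prop:games-tw} applies, and the degenerate instance with $S^- = \emptyset$ (where the conjunction has no conjuncts), which should be dispatched separately.
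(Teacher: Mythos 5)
Your proof is correct and follows essentially the same route as the paper's: both directions hinge on Proposition~\ref{prop:games-tw} applied to $\prod_{\bar a\in S^+}(\D,\bar a)$ versus $(\D,\bar b)$, and the backward direction builds the same conjunction $q=\bigwedge_{\bar b\in S^-}q_{\bar b}$ with renamed existential variables. The only cosmetic difference is that you establish safeness directly from the product of the homomorphisms $h_{\bar a}$, whereas the paper obtains it by citing Proposition~\ref{prop:folklore}; both are fine.
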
  

\begin{proof} 
We concentrate on explanations (the proof for definitions is analogous). 
From left to right, assume for the sake of contradiction 
that $q$ is a $\TW(k)$-explanation for $S^+$ and $S^-$ over $\D$, yet the $(k+1)$-pebble 
QBE test for CQs fails over $\D$, $S^+$, and $S^-$. Since there is a $\TW(k)$-explanation for $S^+$ and $S^-$ over $\D$, 
we have from 
Proposition \ref{prop:folklore} that $\prod_{\bar a \in S^+} (\D,\bar a)$ is safe.  
Therefore, it must be the case that $\prod_{\bar a \in S^+} (\D,\bar a) \to_{k+1} (\D,\bar b)$ for some 
$\bar b \in S^-$. Since $S^+ \subseteq q(\D)$ it is the case that $\bar a \in q(\D)$ for each $\bar a \in S^+$. That is, 
$(\D_q,\bar x) \to (\D,\bar a)$ for each $\bar a \in S^+$. Due to basic properties of direct products, 
this implies that $(\D_q,\bar x) \to \prod_{\bar a \in S^+} (\D,\bar a)$. From Proposition \ref{prop:games-tw} 
we conclude that $(\D_q,\bar x) \to (\D,\bar b)$, i.e., $\bar b \in q(\D)$. This is a contradiction since 
$\bar b \in S^-$ and $q(\D) \cap S^- = \emptyset$. 

From right to left, assume that the $(k+1)$-pebble QBE test for CQs accepts 
$\D$, $S^+$, and $S^-$, i.e., $\prod_{\bar a \in S^+} (\D,\bar a)$ is safe and 
for every tuple $\bar b \in S^-$ it is the case that 
$\prod_{\bar a \in S^+} (\D,\bar a) \not\to_{k+1} (\D,\bar b)$. Since $\prod_{\bar a \in S^+} (\D,\bar a)$ is safe we can apply Proposition \ref{prop:games-tw}, which tells us that for each $\bar b \in S^-$ 
there is a CQ $q_{\bar b}(\bar x)$ such that $(\D_{q_{\bar b}},\bar x) \to \prod_{\bar a \in S^+} (\D,\bar a)$ but
$(\D_{q_{\bar b}},\bar x) \not\to (\D,\bar b)$. Suppose first that $S^- \neq \emptyset$ 
and let: $$q(\bar x) \ := \ \bigwedge_{\bar b \in S^-} q_{\bar b}(\bar x).$$ 
It is easy to see that $q(\bar x)$ is well-defined (since $S^-$ is nonempty) and 
can be expressed as a CQ in $\TW(k)$. For the latter we simply use fresh existentially quantified variables for 
each CQ $q_{\bar b}$ such that $\bar b \in S^-$ and then move all existentially quantified variables in 
$\bigwedge_{\bar b \in S^-} q_{\bar b}(\bar x)$ 
to the front. We now prove that $q(\bar x)$ is a $\TW(k)$-explanation for $S^+$ and $S^-$ over $\D$. 
It easily follows that
$(\D_{q},\bar x) \to \prod_{\bar a \in S^+} (\D,\bar a)$ from the fact that 
$(\D_{q_{\bar b}},\bar x) \to \prod_{\bar a \in S^+} (\D,\bar a)$ for each $\bar b \in S^-$. 
But then $(\D_{q},\bar x) \to (\D,\bar a)$ 
for each $\bar a \in S^+$. This means that $\bar a \in q(\D)$ for each $\bar a \in S^+$, i.e., 
$S^+ \subseteq q(\D)$. 
Assume now for the sake of contradiction that $q(\D) \cap S^- \neq \emptyset$, that is, there is a tuple $\bar b \in q(\D) \cap S^-$.  
Then $(\D_q,\bar x) \to (\D,\bar b)$, which implies that $(\D_{q_{\bar b}},\bar x) \to (\D,\bar b)$. This is a contradiction. 
The case when $S^- = \emptyset$ can be proved 
using similar techniques. 
\end{proof} 

\subsection{The complexity of the $k$-pebble tests for CQs} 
\label{sec:comp} 

As mentioned before, the $k$-pebble tests for CQs can be evaluated 
in exponential time. We show here that such bounds are also optimal: 

\begin{theorem} \label{theo:games-tests} 
Deciding whether the $k$-pebble QBE test for CQs accepts $(\D,S^+,S^-)$ 
is \exptime-complete for each $k > 1$. Similarly, for the $k$-pebble definability 
test for CQs. 
This holds even if restricted to the case when $S^+$ and $S^-$ are unary relations. 
\end{theorem}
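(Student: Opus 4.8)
The plan is to prove membership and hardness separately. Membership in \exptime\ is immediate from what precedes: the left-hand structure $\prod_{\bar a \in S^+}(\D,\bar a)$ is the $|S^+|$-fold power of $\D$ and so has size $|\D|^{|S^+|}$, exponential in the input; checking its safety is an exponential-time scan, and for each of the polynomially many $\bar b \in S^-$ the relation $\prod_{\bar a \in S^+}(\D,\bar a) \to_k (\D,\bar b)$ is decidable in time polynomial in the sizes of the two structures by Proposition \ref{prop:games-poly}, hence in exponential time overall. So I would spend essentially all the effort on \exptime-hardness.

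For hardness I would reduce from the existential pebble game in which the number of pebbles $\l$ is part of the input, shown \exptime-complete by Kolaitis and Panttaja \cite{KP03}. Since \exptime\ is closed under complement, I may equivalently start from the \exptime-complete problem of deciding whether the \emph{spoiler} wins the existential $\l$-pebble game on structures $(A,\bar a_0)$ and $(B,\bar b_0)$; this is the convenient polarity, because the $k$-pebble QBE test accepts exactly when the spoiler wins (i.e.\ $\not\to_k$) against every $\bar b \in S^-$. Given such an instance I would build a single database $\D$ and \emph{unary} relations $S^+,S^-$ so that the fixed-$k$ game played from $\prod_{\bar a \in S^+}(\D,\bar a)=(\D^{|S^+|},\vec a)$ to $(\D,b)$ simulates the $\l$-pebble game on $(A,\bar a_0),(B,\bar b_0)$. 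Crucially, the reduction outputs only the polynomial-size triple $(\D,S^+,S^-)$; the exponential arena $\D^{|S^+|}$ arises solely inside the semantics of the test, and that is precisely what lets fixed $k$ pebbles recover the $\l$ pebbles of the source game.

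The engine of the reduction is the observation that a single pebble on the $m$-fold power $\D^m$ (with $m:=|S^+|$) names an $m$-tuple of elements of $\D$, so $k$ pebbles on $\D^m$ expose $km$ coordinates over $\D$; choosing $m=\lceil \l/(k-1)\rceil$ (polynomial, since $k\ge 2$) lets fixed $k$ pebbles carry the $\l$ coordinates of a configuration of the simulated game. Concretely I would let $\D$ consist of tagged copies of $A$ and $B$ together with an ``index'' part holding the $m$ coordinate addresses, and add connecting relations whose effect in the power $\D^m$ is (i) to force the mandatory image $\vec a\mapsto b$ to install the initial configuration, and (ii) to make a partial homomorphism on the pebbled part of $\D^m$ coincide with the winning condition of the $\l$-pebble game between $A$ and $B$. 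A two-way translation of strategies then shows that the duplicator wins on $(\D^m,\vec a)\to (\D,b)$ iff the duplicator wins the $\l$-pebble game, whence the test accepts iff the spoiler wins the source instance. A dedicated self-loop gadget keeps $\prod_{\bar a \in S^+}(\D,\bar a)$ safe, and the definability variant is obtained by padding $\D$ so that for every tuple outside $S^+$ other than $b$ the spoiler trivially wins, reducing the definability test to the single game against $b$.

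The hard part will be the faithful simulation of $\l$ base pebbles by $k$ power pebbles. A relation atom holds between two elements of $\D^m$ exactly when it holds \emph{coordinatewise} in $\D$, so the power does not literally implement a multi-pebble game; the gadgets must \emph{synchronize} the coordinates that jointly encode one logical pebble, \emph{separate} coordinates belonging to distinct logical pebbles, and block the spurious partial homomorphisms that the conjunctive behaviour of the product would otherwise hand to the duplicator. Designing these gadgets so that the induced game is \emph{exactly} the $\l$-pebble game, and then checking that spoiler and duplicator strategies transfer in both directions, is the technical core of the argument.
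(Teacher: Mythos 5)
Your membership argument is fine and matches the paper. The hardness argument, however, has a genuine gap at its center, and it is not merely unfinished gadgetry. The power construction is asymmetric: a spoiler pebble on $\D^m$ indeed names an $m$-tuple over $\D$, but the duplicator answers each such pebble with a \emph{single element} of the polynomial-size database $\D$. To simulate a configuration of the $\ell$-pebble game on $(A,B)$, the duplicator of the simulating game must commit, through one element of $\D$, to the images in $B$ of all $m$ base-game pebbles packed into that coordinate block; doing this literally requires $\D$ to contain on the order of $|B|^{m}$ elements, which is exponential. A second obstruction sits in the winning condition: an atom of $\D^m$ holds iff it holds in \emph{every} coordinate, so the partial-homomorphism constraint on a pebbled pair only fires when all $m$ encoded base pebbles simultaneously witness the same atom pattern; if you tag coordinates to keep the $m$ logical pebbles separate (the natural way to ``separate coordinates belonging to distinct logical pebbles''), then no nontrivial atom survives across differently-tagged coordinates in the product, and each power pebble degenerates to carrying essentially one usable constraint. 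You flag these issues as ``the technical core,'' but you do not supply the idea that resolves them, and the asymmetry above is precisely why the paper does not take this route.

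The paper instead keeps the number of pebbles fixed at $k$ throughout and imports hardness from alternation: it reduces from acceptance of an alternating Turing machine in space $n$, recast as the value problem for an exponential-size monotone circuit $C_{M,n}$, and encodes the exponential-size left-hand database $\tilde\D_{C_{M,n}}$ of the circuit-value reduction of \cite{KP03} as a product of polynomially many polynomial-size factors (Theorem~\ref{theo:ii}). The same asymmetry you face then reappears on the right-hand side, since the target database of \cite{KP03} also grows with the circuit; the paper's key new ingredient, Lemma~\ref{lemma:template}, is exactly what removes it: the existential $k$-pebble game remains \ptime-complete against a \emph{fixed} template $\D_k$, so the right-hand side of the reduction can be taken of constant size. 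If you want to salvage your plan, you need an analogue of that lemma --- some device letting a polynomial-size right-hand structure absorb an unbounded number of simulated duplicator pebbles --- and nothing in the proposal currently plays that role.
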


As a corollary to Theorems \ref{theo:games} and \ref{theo:games-tests}, 
we obtain the following interesting result: 

\begin{corollary} \label{coro:completeness} 
The problems {\sc $\TW(k)$-query-by-example} and 
{\sc $\TW(k)$-definability} are \exptime-complete for each fixed $k \geq 1$. 
This holds even if restricted to the case when the relations to be explained/defined are 
unary. 
\end{corollary}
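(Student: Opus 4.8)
The plan is to prove matching upper and lower bounds and to reduce all four cases (QBE and definability, for each test) to a single hardness argument. By Theorem~\ref{theo:games} the $k$-pebble QBE (resp.\ definability) test accepts $(\D,S^+,S^-)$ (resp.\ $(\D,S^+)$) exactly when a $\TW(k-1)$-explanation (resp.\ definition) exists, so it suffices to establish the bounds for the QBE test and transfer them.

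For the upper bound I would, on input $(\D,S^+,S^-)$, build the product $P:=\prod_{\bar a\in S^+}(\D,\bar a)$ explicitly. It has at most $|\D|^{|S^+|}$ elements, hence is of exponential size and computable in exponential time, and its safety is checked by a linear scan of $P$. Then for each of the polynomially many $\bar b\in S^-$ I would run the polynomial-time procedure of Proposition~\ref{prop:games-poly} to decide $P\to_k(\D,\bar b)$; as $k$ is fixed, each run is polynomial in $|P|+|\D|$, i.e.\ exponential overall. This puts the test in \exptime. The definability test is identical, except that the approximation tests range over all (exponentially many, enumerable) tuples outside $S^+$.

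The substance of the theorem is \exptime-hardness, and the whole point is that $\to_k$, although polynomial for fixed $k$ (Proposition~\ref{prop:games-poly}), is applied to the exponentially larger $P$. I would turn this into computational power by encoding an alternating polynomial-space (equivalently, \exptime) computation into the product, reducing from the acceptance problem for alternating Turing machines that use polynomial space---a two-player game with universal (\gand) and existential (\gor) branching and a distinguished initial configuration (\init). A configuration of such a machine is a tuple over its $N$ tape cells, with $N$ polynomial in the input, so the set of configurations is naturally a product; taking $|S^+|=N$ unary positive examples $a_1,\dots,a_N$ in a \emph{single} database $\D$ realizes $\prod_i(\D,a_i)=(\D^N,(a_1,\dots,a_N))$ as the space of configurations, the cell gadgets of $\D$ carrying the values (\val), labels (\lab) and head offsets (\off) needed to test legality of a transition by inspecting only an $O(1)$-cell window. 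The fixed-$k$ existential pebble game on $\D^N$ against a single negative example $(\D,b)$ then plays out as the alternating computation, spoiler and duplicator moves corresponding to the two alternating roles, so that a winning spoiler strategy exists iff the machine accepts (matching test acceptance, which asks that $P\not\to_k(\D,b)$). It is cleanest to verify this through Theorem~\ref{theo:games}: acceptance of the test is equivalent to the existence of one CQ of treewidth at most $k-1$ mapping into every $(\D,a_i)$ but into no $(\D,b)$ for $b\in S^-$; such a witness may be forced to be of exponential size (cf.\ Proposition~\ref{prop:size-cq}), and it is precisely this exponential, bounded-treewidth object that encodes the full computation tree---tree-shaped ($\TW(1)$) for the base case $k=2$, thickened to treewidth exactly $k-1$ for larger $k$ so that hardness holds for every fixed $k>1$. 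I would borrow the game-analysis machinery of Kolaitis and Panttaja~\cite{KP03} to control strategies on these exponential products.

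The main obstacle will be the correctness of the simulation: proving both soundness and completeness of the correspondence between winning spoiler strategies in the fixed-$k$ game on the exponential product and accepting runs of the alternating machine, while simultaneously ensuring that (i) $P$ is safe, so that Theorem~\ref{theo:games} applies and acceptance is not forced for a spurious reason, (ii) the construction uses only \emph{unary} $S^+$ and $S^-$, and (iii) a single negative example rules out all illegitimate separating queries. Safety and the unary restriction I expect to handle with routine padding gadgets; the delicate part is showing that the synchronization forced by the direct product mirrors the alternation of the game exactly---equivalently, that the bounded treewidth of the witnessing CQ matches the bounded local window of a transition---and this is where the \gand/\gor/\init gadgets and their interaction with the $k$-pebble game must be analyzed with care.
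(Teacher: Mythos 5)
Your upper bound and the overall shape of your lower bound match the paper's: it also reduces from acceptance of an alternating Turing machine using space $n$, recasts this as a (monotone) circuit value problem, and uses the direct product to realize the exponentially large left-hand structure of the Kolaitis--Panttaja pebble-game reduction out of polynomially many polynomial-size factors. (The paper routes this through an intermediate problem, the {\sc $k$-pebble PHP}, shown \exptime-complete in Theorem~\ref{theo:ii}, and then reuses the ten Cate--Dalmau reduction from PHP to failure of the QBE/definability test; that is a packaging difference, not a substantive one.)

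The genuine gap is on the right-hand side of the game. In the construction of \cite{KP03}, \emph{both} databases $\D_C$ and $\D_C'$ closely mirror the circuit $C$ and have size proportional to $|C|$. Here the circuit $C_{M,n}$ is exponential in $|M|$ and $n$; the product $\prod_i(\D,a_i)$ lets you encode an exponential left structure with a polynomial-size input, but you have no such mechanism for the target: in your reduction the duplicator plays on a single $(\D,b)$ with $\D$ of polynomial size, so the KP03 target cannot be taken as is. Saying you will ``borrow the game-analysis machinery of \cite{KP03}'' does not resolve this; the reduction breaks at exactly this point unless that construction is reworked so that the target does not grow with the circuit. This is the paper's main new technical contribution (Lemma~\ref{lemma:template}): for each fixed $k>1$ there is a \emph{fixed} database $\D_k$, depending only on $k$, such that deciding $\D\to_k\D_k$ is already \ptime-complete; the hardness proof then arranges $\prod_i \D_i$ to coincide with the exponential circuit encoding $\tilde \D_{C_{M,n}}$ while $\D$ coincides with the constant-size $\D_k$. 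Your sketch flags the synchronization of the product with the alternation of the machine as the delicate point, which is fair, but it omits the target-size obstruction entirely, and that is the step that would fail. (Two minor points: the witnessing bounded-treewidth CQ can be forced to be \emph{double} exponential, per Proposition~\ref{prop:size-ub}, not merely exponential; and your direction of the correspondence, machine accepts iff test accepts, is the complement of the paper's, which is harmless since \exptime\ is closed under complement.)
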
 

We now provide a brief outline of the main ideas used for proving the lower bounds in Theorem \ref{theo:games-tests}. 
Let us first notice that in the case of the general QBE/definability tests for CQs, a \conexp\ lower bound is 
obtained in \cite{CD15} as follows: 
\begin{enumerate} 
\item It is first shown that the following 
{\em product homomorphism problem} (PHP) is \nexp-hard: 
Given databases 
$\D_1,\dots,\D_m$ and $\D$, is it the case that $\prod_{1 \leq i \leq m} \D_i \to \D$?  
\item 
It is then shown that there is an easy polynomial-time 
reduction from PHP to the problem of checking whether 
the QBE/definability test fails on its input. 
\end{enumerate} 

The ideas used for proving (2) can be easily adapted to show that there is a polynomial-time reduction from the following 
{\em relaxed} version of PHP to the problem of checking whether the $k$-pebble QBE/definability test fails on its input: 

\begin{center}
\fbox{\begin{tabular}{ll}
\small{PROBLEM} : & {\sc $k$-pebble PHP} (for $k > 1$)
\\{\small INPUT} : & Databases $\D_1,\dots,\D_m$ and $\D$ over the same schema 
\\
{\small QUESTION} : &  Is it the case that $\prod_{1 \leq i \leq m} \D_i \to_{k} \D$?
\end{tabular}}
\end{center}

We establish that this relaxed version of PHP is \exptime-complete for each fixed 
$k > 1$: 

\begin{theorem} \label{theo:ii} 
The problem {\sc $k$-pebble PHP} is \exptime-complete for each fixed $k > 1$. 
\end{theorem}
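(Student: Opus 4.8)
This direction is immediate from the machinery already in place. Given $\D_1,\dots,\D_m$ and $\D$, I would first materialize the product $P=\prod_{1\le i\le m}\D_i$ explicitly: its universe consists of $m$-tuples, so $|P|\le\prod_i|\D_i|$ is at most exponential in the input, and its atoms can be enumerated in exponential time directly from the coordinatewise definition of $\otimes$. By Proposition~\ref{prop:games-poly}, once $P$ is available the relation $P\to_k\D$ is decidable in time polynomial in $|P|+|\D|$, hence in exponential time overall. So {\sc $k$-pebble PHP} is in \exptime.

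\textbf{\exptime-hardness.} The plan is to reduce an \exptime-complete problem --- the acceptance problem for alternating Turing machines running in polynomial space, which exactly captures \exptime\ --- to {\sc $k$-pebble PHP}, exploiting the product $P=\prod_i\D_i$ as a succinct, exponentially large arena. The guiding principle is the standard ``succinctness costs one exponential'' phenomenon, together with the fact, established by Kolaitis and Panttaja \cite{KP03}, that the existential pebble game becomes \exptime-hard once the number of pebbles is unbounded; here the role of ``many pebbles'' is played instead by the exponential blow-up of the product, while the pebble budget $k$ stays fixed.

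Concretely, I would let the $m$ factor databases encode a single machine configuration --- the contents of the polynomially many tape cells together with the head position and control state --- as one element of $P$, with factor $\D_i$ responsible for cell $i$ and exposing, through its local relations, just the information needed to verify the part of a transition touching that cell. The key leverage is that an atom holds in $P$ exactly when it holds in every coordinate, so a relation asserted between two ``configuration elements'' is satisfied precisely when every factor independently certifies its local piece of the corresponding transition; this is how a purely coordinatewise product can enforce a global consistency condition on configurations. A position of the $k$-pebble game on $(P,\D)$ then records the current and next configurations (plus a bounded amount of bookkeeping), the spoiler's moves drive the simulation one transition at a time, and the duplicator's responses on $\D$ witness a legal continuation. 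The alternation between spoiler and duplicator is set up to mirror the alternation between universal and existential states, with the winning condition arranged so that the duplicator preserves the partial-homomorphism invariant indefinitely if and only if the machine accepts. A fixed budget of pebbles suffices because only a bounded number of configuration elements must be held at any moment, and I would carry out the construction for the exact value $k$ in the statement, padding with inert gadget elements that force any surplus pebbles to be placed trivially, so that hardness holds for every fixed $k>1$.

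The hard part will be making the simulation faithful against two opposing tensions. First, atoms of $P$ are conjunctive across coordinates, whereas a transition step is intrinsically local (it rewrites one cell and shifts the head by one), so each factor $\D_i$ must be engineered --- typically via auxiliary ``unchanged/copy'' relations and marker elements pinpointing the active cell --- so that the coordinatewise conjunction enforces a \emph{legal single-step change} rather than an arbitrary simultaneous rewrite of all coordinates. Second, the spoiler replaces only one pebble per round, so both directions of the correspondence must be verified: every strategy of the machine's accepting player has to translate into a winning duplicator strategy on $(P,\D)$, and conversely a winning duplicator strategy has to be decodable into an accepting computation, all while matching the spoiler/duplicator turn order to the universal/existential alternation. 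Checking this equivalence of strategies --- in particular that the fixed pebble budget never forces the duplicator to discard information the simulation still needs --- is the technical core, and is precisely where the strategy and pumping arguments underlying \cite{Grohe99,KP03} would have to be adapted to the product setting.
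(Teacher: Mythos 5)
Your membership argument is exactly right and matches the paper: materialize the exponential-size product and invoke Proposition~\ref{prop:games-poly}. The hardness half, however, has a genuine gap, and it sits precisely where the paper locates the main difficulty. Your reduction puts all of the exponential information on the left: the product $P=\prod_i\D_i$ encodes configurations, and a single pebble on $P$ can indeed hold a full configuration. But you never say what the single database $\D$ on the right-hand side is, beyond ``the duplicator's responses on $\D$ witness a legal continuation.'' In the Kolaitis--Panttaja style of simulation \cite{KP03} that you propose to adapt, the right-hand structure is where the duplicator's choices (truth values of gates, branches of the alternation) are recorded, and in their construction its size grows with the circuit being simulated. Here the computation being simulated corresponds to an exponentially large circuit $C_{M,n}$, yet $\D$ must be produced in polynomial time and is \emph{not} a product, so it cannot succinctly encode an exponential-size right-hand structure the way $P$ encodes the left-hand one. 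This asymmetry is the crux of the theorem, and your sketch does not resolve it.

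The paper's resolution is a new lemma (Lemma~\ref{lemma:template}): for each fixed $k>1$ there is a \emph{fixed} template $\D_k$, independent of the input, such that deciding $\D\to_k\D_k$ is already \ptime-complete. With the right-hand side frozen to $\D_k$, the entire reduction can be pushed into the left-hand structure, which the product supplies succinctly: one arranges $\prod_i\D_i$ to coincide with the exponential-size database $\tilde\D_{C_{M,n}}$ built from the circuit for the alternating machine, and $\D$ to coincide with $\D_k$. Without this fixed-template strengthening of \cite{KP03} --- or some substitute idea explaining how a polynomial-size, non-product right-hand side can support the simulation of an exponentially long alternating computation --- your reduction cannot be completed as described. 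The rest of your outline (direct ATM simulation versus going through the circuit value problem, padding to handle every fixed $k$) is reasonable, but it is not where the difficulty lies.
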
 

To prove this result, we exploit techniques from \cite{KP03,Grohe99} that study the complexity of pebble games. 
In particular, it is shown in \cite{KP03} that for each fixed $k > 1$, 
checking whether $\D \rightarrow_k \D'$ is {\ptime}-complete. 
The proof uses an involved 
reduction from the {\em monotone circuit value problem}, that is, given a monotone circuit $C$, 
it constructs two databases $\D_C$ and $\D_C'$ such that the value of $C$ is $1$ if and only if 
$\D_C\rightarrow_k \D_C'$.

In our case, to show that {\sc $k$-pebble PHP} is {\exptime}-hard for each fixed $k > 1$, 
we reduce from the following well-known {\exptime}-complete problem: 
Given an alternating Turing machine $M$ and a positive integer $n$, decide whether $M$ accepts the 
empty tape using $n$ space. 
The latter problem can be easily recast as a circuit value problem: We can construct 
a circuit $C_{M,n}$ such that the value of $C_{M,n}$ is 1 if and only if $M$ 
accepts the empty tape using $n$ space. 
The main idea of our reduction is to construct databases $\D_1,\dots,\D_m$ and $\D$, 
given $M$ and $n$, such that: $$\prod_{1\leq i\leq m}\D_i \rightarrow_k \D \ \ \Longleftrightarrow \ \ 
\D_{C_{M,n}} \, \rightarrow_k \, \D_{C_{M,n}}',$$ 
where $\D_{C_{M,n}}$ and $\D_{C_{M,n}}'$ are defined as in \cite{KP03}. 

A natural approach then is to construct $\D_1,\dots,\D_m,\D$ such that $\prod_{1\leq i\leq m}\D_i$ and $\D$ roughly 
coincide with $\D_{C_{M,n}}$ and $\D_{C_{M,n}}'$. However, there is a problem with this: 
the databases $\D_{C_{M,n}}$ and $\D_{C_{M,n}}'$ closely resemble the circuit $C_{M,n}$, 
but the size of $C_{M,n}$ is exponential in $|M|$ and $n$, and so are the sizes of $\D_{C_{M,n}}$ and $\D_{C_{M,n}}'$. 
Although it is possible to codify the exponential size database $\D_{C_{M,n}}$ using 
a product of polynomial size databases $\D_1,\dots,\D_m$, 
we cannot do the same with the exponential size $\D_{C_{M,n}}'$ using $\D$ only. 
To overcome this, we need to extend the techniques in \cite{KP03} and show that the 
complexity of the existential $k$-pebble game is \ptime-complete even over a {\em fixed template}: 

\begin{lemma} \label{lemma:template} 
For each fixed $k>1$, there is a database $\D_k$ that only depends on $k$, such that the following problem is 
{\ptime}-complete: Given a database $\D$, decide whether $\D \rightarrow_k \D_k$.
\end{lemma}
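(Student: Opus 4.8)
The plan is to treat membership and hardness separately. Membership in \ptime\ is immediate: since $k$ is fixed, Proposition \ref{prop:games-poly} already decides $\D \to_k \D'$ in polynomial time for \emph{any} pair of databases, hence in particular for $(\D,\D_k)$ once $\D_k$ is fixed. All the work therefore goes into establishing \ptime-hardness with a \emph{single} target $\D_k$ that does not depend on the input.

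For hardness I would reduce from the monotone circuit value problem (MCVP) of fan-in two, which is \ptime-complete under logspace reductions, mirroring the reduction of \cite{KP03} but redistributing information between the two sides of the game. The conceptual point is that in the existential $k$-pebble game the spoiler plays only on the source and the duplicator only on the target, and the \emph{legal responses} of the duplicator are governed by purely local relational constraints. I would therefore encode the entire circuit $C$ --- both its wiring and its gate types --- into the \emph{source} database $\D = \D_C$, whose elements are copies of the gates and wires of $C$ decorated by relational ``type'' markers, and I would let the \emph{target} be a fixed gadget library $\D_k$ containing only a constant number of elements: a \textsc{True} state, a \textsc{False} state, and small gadgets realising the local consistency condition of an AND gate, of an OR gate, and of the input and output gates. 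Since $\D_k$ depends only on the bounded fan-in and on $k$, and not on $C$, it is a legitimate fixed template, while $\D_C$ is clearly logspace-computable from $C$.

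Correctness rests on matching the dynamics of the game to the semantics of $C$. The spoiler, walking down the circuit from the output, challenges the claim that a gate evaluates to $1$; at an AND gate the spoiler, as adversary, chooses which of the two inputs to pursue, whereas at an OR gate the branching is resolved by the duplicator's response into the target. The duplicator survives forever exactly when he can maintain a globally consistent truth assignment into $\D_k$ that satisfies every local gate constraint and assigns $1$ to the output; by the greatest-fixed-point semantics of ``the duplicator wins'', this is possible if and only if $C$ evaluates to $1$. A constant number of pebbles suffices because each local constraint involves only a gate together with its at most two inputs, so the relations of $\D_k$ can be designed so that the partial-homomorphism condition on $\le k$ pebbled elements encodes precisely the local AND/OR/input semantics; I expect this to yield a suitable $\D_k$ for each fixed $k > 1$.

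I expect the main obstacle to be the completeness direction, namely showing that a winning duplicator strategy really forces $C = 1$. Because $\D_k$ is of \emph{constant} size, distinct gates of $C$ are necessarily mapped by the duplicator onto the same few template elements, and the danger is that this ``reuse'' lets the duplicator declare a gate \textsc{True} without honest support from its inputs. The crux is therefore to engineer the relations of $\D_k$ so that no partial-homomorphism-preserving response can simultaneously keep a \textsc{True}-claimed AND or OR gate consistent with a challenged input and evade the spoiler, unless the underlying truth values genuinely propagate through the circuit. This requires a careful invariant on the reachable game configurations --- essentially that the pebbled source elements together with the duplicator's responses always exhibit a partial satisfying evaluation --- and a verification that $k$ pebbles give the spoiler enough room to expose any dishonest labelling. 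Once this invariant is in place, combining it with the \ptime\ upper bound yields \ptime-completeness, and the fixedness of $\D_k$ is precisely what the subsequent reduction in Theorem \ref{theo:ii} needs in order to replace the exponential target $\D_{C_{M,n}}'$ by a template.
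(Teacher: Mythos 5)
Your overall strategy coincides with the paper's: \ptime-membership is immediate from Proposition~\ref{prop:games-poly}, and hardness is obtained by a reduction from the monotone circuit value problem in which the circuit $C$ is encoded entirely into the source database $\tilde\D_C$ while the target is a fixed, constant-size template $\D_k$, with the game dynamics (spoiler branches at AND gates, duplicator's response resolves OR gates) exactly as in the \cite{KP03} construction. Your membership argument and the identification of the central difficulty are both correct.

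However, the proposal stops at the point where the actual content of the lemma begins. In the known reduction of \cite{KP03} the target $\D_C'$ grows with $C$ precisely because the duplicator's per-gate truth commitments are recorded by \emph{which} copy of a gate gadget he maps into; once the target is collapsed to constantly many elements, that bookkeeping disappears and the only information available is carried by the at most $k$ currently pebbled source elements and their images. Your text acknowledges this (``the danger is that this reuse lets the duplicator declare a gate \textsc{True} without honest support'') but resolves it only by fiat: ``engineer the relations of $\D_k$ so that\dots'' and ``I expect this to yield a suitable $\D_k$.'' The concrete gadget library, the relational markers on $\tilde\D_C$ that force the spoiler's descent to be faithful to the wiring, and the invariant on reachable game configurations showing that a surviving duplicator strategy induces a genuine satisfying evaluation of $C$ are exactly what must be exhibited, and none of them is. As written, the argument establishes that the right reduction \emph{shape} exists but not that the reduction itself does; to complete it you need to write down $\D_k$ explicitly and prove both directions of the equivalence $C=1 \Leftrightarrow \tilde\D_C \to_k \D_k$, including the completeness direction for which your invariant is currently only a wish.
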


To prove this, we again use a reduction from the circuit value problem that given a circuit $C$ constructs 
a database $\tilde \D_C$ such that $C$ takes value 1 if and only if $\tilde \D_C \to_k \D_k$.  
We then use the following idea to prove that {\sc $k$-pebble PHP} is \exptime-complete: Given 
$M$ and $n$, we construct in polynomial time databases $\D_1,\dots,\D_m$ and $\D$ such that 
$\prod_{1\leq i\leq m}\D_i$ and $\D$ roughly coincide with $\tilde \D_{C_{M,n}}$ and $\D_k$, respectively. 
It then follows that: $$\prod_{1\leq i\leq m}\D_i \, \rightarrow_k \, \D 
\ \ \Longleftrightarrow \ \ \tilde \D_{C_{M,n}} \, \rightarrow_k \, \D_k \ \ \Longleftrightarrow \ \ 
\text{$M$ accepts the 
empty tape using $n$ space.}$$  

\subsection{Evaluating the result of $\TW(k)$-explanations}

Recall that computing the result of $\CQ$-explanations might require double exponential time. For 
$\TW(k)$-explanations, instead, we can do this in single exponential time. 

\begin{theorem} \label{theo:eval} Fix $k \geq 1$. 
There is a single exponential time algorithm that, given a database $\D$ and $n$-ary relations $S^+$ and $S^-$ over $\D$, does 
the following: 
\begin{enumerate}
\item It checks whether there is a $\TW(k)$-explanation for $S^+$ and $S^-$ over $\D$, and
\item if the latter holds, it computes the evaluation $q(\D)$ of one such $\TW(k)$-explanation $q$.  
\end{enumerate}  
\end{theorem}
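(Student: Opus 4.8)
The plan is to combine the existence test of Theorem~\ref{theo:games} with an effective version of the witness construction in Proposition~\ref{prop:games-tw}, and then exploit the tractability of evaluating CQs in $\TW(k)$. The key observation is that, although a $\TW(k)$-explanation may be of exponential size, its bounded treewidth means its evaluation can be performed in time polynomial in that (exponential) size, i.e., in single exponential time overall. This is precisely what avoids the double exponential blow-up incurred by evaluating the canonical explanation $\prod_{\bar a \in S^+}(\D,\bar a)$, whose treewidth is unbounded.

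First I would settle item~(1). By Theorem~\ref{theo:games}, a $\TW(k)$-explanation for $S^+$ and $S^-$ over $\D$ exists if and only if the $(k+1)$-pebble QBE test accepts $(\D,S^+,S^-)$. I would run this test directly: build $\widehat\D := \prod_{\bar a \in S^+}(\D,\bar a)$, which has size $|\D|^{|S^+|}$ and is therefore computable in exponential time; check that $\widehat\D$ is safe (an exponential-time scan); and, for each $\bar b \in S^-$, decide $\widehat\D \not\to_{k+1}(\D,\bar b)$ using the polynomial-time procedure of Proposition~\ref{prop:games-poly}. Each such decision runs in time polynomial in $|\widehat\D| + |\D|$, hence in single exponential time, and there are only polynomially many tuples $\bar b$; so the whole test lies in \exptime.

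If the test rejects we report that no explanation exists. Otherwise we construct an explicit $\TW(k)$-explanation, mimicking the right-to-left direction of the proof of Theorem~\ref{theo:games}. Since the test accepted, $\widehat\D$ is safe, so for each $\bar b \in S^-$ the (effective) Proposition~\ref{prop:games-tw} produces a CQ $q_{\bar b}(\bar x)$ in $\TW(k)$ with $(\D_{q_{\bar b}},\bar x) \to \widehat\D$ and $(\D_{q_{\bar b}},\bar x) \not\to (\D,\bar b)$; this query is read off from the spoiler's winning strategy, is of size exponential in the input (polynomial in $|\widehat\D|$), and is constructible in exponential time. Setting $q(\bar x) := \bigwedge_{\bar b \in S^-} q_{\bar b}(\bar x)$, with pairwise disjoint fresh existential variables and the shared free variables $\bar x$, the Gaifman graph of the existential variables of $q$ is the disjoint union of those of the $q_{\bar b}$, whence its treewidth is the maximum of theirs, i.e., at most $k$. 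Thus $q \in \TW(k)$, it has exponential size, and it is an explanation by the argument of Theorem~\ref{theo:games}. The degenerate case $S^- = \emptyset$ is handled analogously.

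Finally, for item~(2) I would evaluate $q(\D)$. As $q \in \TW(k)$ and the evaluation problem for $\TW(k)$ is solvable in time polynomial in the combined size of query and database, and since $|q|$ is single exponential while $|\D|$ is polynomial, the set $q(\D)$ is computed in single exponential time. The main obstacle is not the existence test but this last phase: one must guarantee that the separating queries $q_{\bar b}$ of Proposition~\ref{prop:games-tw} can actually be produced (not merely shown to exist) in exponential time, and that conjoining them keeps the treewidth at most $k$. Granting these points, the tractable evaluation of bounded-treewidth CQs is exactly what pins the overall cost at single exponential, improving on the generic double exponential bound for arbitrary CQ-explanations.
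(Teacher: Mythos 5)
Your handling of item~(1) is correct and matches the paper: run the $(k+1)$-pebble QBE test by materializing $\widehat\D = \prod_{\bar a \in S^+}(\D,\bar a)$ and invoking Proposition~\ref{prop:games-poly}, all in single exponential time. The gap is in item~(2), and it is fatal to your route. You claim that each separating query $q_{\bar b}$ from Proposition~\ref{prop:games-tw} is ``of size exponential in the input (polynomial in $|\widehat\D|$)''. That is not what the proposition gives: it yields a CQ of size \emph{exponential in the combined size of the two structures being played on}, here $\widehat\D$ and $(\D,\bar b)$. Since $|\widehat\D|$ is already exponential in the input, each $q_{\bar b}$ --- and hence $q = \bigwedge_{\bar b \in S^-} q_{\bar b}$ --- is only bounded by a \emph{double} exponential in $|\D| + |S^+| + |S^-|$. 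This is not a looseness you can tighten: Proposition~\ref{prop:size-ub}(2) exhibits families where the smallest $\TW(k)$-explanation has size at least $2^{2^n}$, so no single-exponential-time algorithm can output an explicit $\TW(k)$-explanation, and evaluating a double-exponential-size query in time polynomial in $|q|+|\D|$ is still double exponential. Your plan of ``construct $q$, then evaluate it using tractability of $\TW(k)$'' therefore cannot meet the stated bound.

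The paper's proof avoids this by never materializing $q$. After the test accepts, it computes the set $S^e$ of all $n$-ary tuples $\bar b$ over $\D$ with $\widehat\D \to_{k+1} (\D,\bar b)$; there are only $|\D|^n$ such tuples and each check is polynomial in $|\widehat\D|+|\D|$, so $S^e$ is computed in single exponential time. One then argues (using exactly the separating queries $q_{\bar b}$ for the tuples $\bar b \notin S^e$, but only as an existence argument) that $S^e = q(\D)$ for \emph{some} $\TW(k)$-explanation $q$, with $S^+ \subseteq S^e$ and $S^e \cap S^- = \emptyset$. The theorem only promises the evaluation $q(\D)$, not $q$ itself --- indeed the paper notes immediately afterwards that including $q$ in the output is provably impossible within these bounds. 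To repair your proof, replace the construct-and-evaluate phase with this direct computation of $S^e$ via pebble-game tests.
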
 

\begin{proof} 
We first check in exponential time the existence of one such $\TW(k)$-explanation for $S^+$ and $S^-$ over $\D$ 
using the $(k+1)$-pebble QBE test for CQs. If such $\TW(k)$-explanation exists, we compute in 
exponential time the set $S^e$ of all $n$-ary tuples $\bar b$ over $\D$ such that $\prod_{\bar a \in S^+} (\D,\bar a) 
\to_{k+1} (\D,\bar b)$. Notice, in particular, that $S^+ \subseteq S^e$ and $S^e \cap S^- = \emptyset$. Moreover, it can be shown that $S^e = q(\D)$ for some $\TW(k)$-explanation $q$ for $S^+$ and $S^-$ over $\D$. 
\end{proof} 

Notably, the previous result computes the result of a $\TW(k)$-explanation $q$ for $S^+$ and $S^-$ over $\D$ 
without explicitly computing $q$. One might wonder whether it is possible to also include $q$ in 
the output of the algorithm. The answer is negative, and the reason is that 
$\TW(k)$-explanations/definitions can be double exponentially large 
in the worst case: 

\begin{proposition} {\em \label{prop:size-ub}}   
Fix $k \geq 1$. The following holds:
\begin{enumerate} 
\item 
Assume that there is a $\TW(k)$-explanation for $S^+$ and $S^-$ over $\D$. Then there is one such $\TW(k)$-explanation
of at most double exponential size. 
\item 
There is a family $(\D_n,S^+_n,S^-_n)_{n \geq 0}$ of tuples of databases $\D_n$ and relations $S^+_n$ and $S^-_n$ over $\D_n$, 
such that (a) the combined size of $\D_n$, $S^+_n$, and $S^-_n$ is polynomial in $n$, (b) there is a 
$\TW(k)$-explanation for $S^+_n$ and $S^-_n$ over $\D_n$, and 
(c) the size of the smallest such  
$\TW(k)$-explanation is at least $2^{2^n}$. 
\end{enumerate} 
The same holds for $\TW(k)$-definitions.
\end{proposition}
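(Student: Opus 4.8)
The plan is to treat the two statements separately: statement~(1) falls out of the construction already used in Theorem~\ref{theo:games} together with the size bound of Proposition~\ref{prop:games-tw}, while statement~(2) requires a dedicated hardness family in which two exponential blow-ups are forced in succession.

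For statement~(1) I would simply read off the size of the explanation produced in the right-to-left direction of the proof of Theorem~\ref{theo:games}. If a $\TW(k)$-explanation exists then the $(k+1)$-pebble QBE test accepts $(\D,S^+,S^-)$; in particular $P := \prod_{\bar a \in S^+}(\D,\bar a)$ is safe and $P \not\to_{k+1} (\D,\bar b)$ for every $\bar b \in S^-$. The ``moreover'' part of Proposition~\ref{prop:games-tw} then yields, for each $\bar b \in S^-$, a $\TW(k)$ query $q_{\bar b}$ of size exponential in $|P|$ such that $(\D_{q_{\bar b}},\bar x) \to P$ but $(\D_{q_{\bar b}},\bar x) \not\to (\D,\bar b)$. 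Since $|P| \le |\D|^{|S^+|}$ is single exponential in the input, each $q_{\bar b}$ is of double exponential size, and the explanation $q := \bigwedge_{\bar b \in S^-} q_{\bar b}$ assembled in Theorem~\ref{theo:games} multiplies this only by $|S^-|$, which is polynomial. Hence $q$ is of at most double exponential size. The argument for $\TW(k)$-definitions is identical, quantifying over all $n$-ary tuples $\bar b \notin S^+$.

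For statement~(2) the target is a polynomial-size family $(\D_n,S^+_n,S^-_n)$ for which a $\TW(k)$-explanation exists but the smallest one has at least $2^{2^n}$ atoms. My plan is to force two successive exponential blow-ups. First, I would arrange the positive examples so that the canonical structure $P_n := \prod_{\bar a \in S^+_n}(\D_n,\bar a)$ is safe and of size $2^{\Theta(n)}$, in the spirit of the single-exponential $\CQ$ lower bound of Proposition~\ref{prop:size-cq}(2), while keeping $P_n \not\to_{k+1} (\D_n,\bar b)$ for all $\bar b \in S^-_n$, so that a $\TW(k)$-explanation is guaranteed to exist by Theorem~\ref{theo:games}. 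Second, I would engineer $P_n$ together with the negatives so that every $\TW(k)$ query separating $P_n$ from the negatives is forced to be exponential in $|P_n| = 2^{\Theta(n)}$, and therefore of size $2^{2^{\Theta(n)}}$.

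The crux, and the step I expect to be the main obstacle, is precisely this second, genuinely lower, bound: showing that no $\TW(k)$ query smaller than the tree-like unfolding produced by Proposition~\ref{prop:games-tw} can separate $P_n$ from the negatives. I would attack it from the pebble-game side. A $\TW(k)$ query separating $P_n$ from $(\D_n,\bar b)$ is, up to homomorphic equivalence, a witness of a winning strategy for the spoiler in the existential $(k+1)$-pebble game on $(P_n,\bar a)$ and $(\D_n,\bar b)$, and its number of atoms is bounded below by the size of the smallest such strategy. The plan is then to reuse the encoding of computations into the $(k+1)$-pebble game developed for Lemma~\ref{lemma:template} and Theorem~\ref{theo:ii}, but now to choose the encoded instance so that \emph{every} winning spoiler strategy must branch through exponentially many (in $|P_n|$) positions---for instance by encoding an accepting alternating computation all of whose accepting subtrees are of maximal size. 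Proving that this branching cannot be compressed into a smaller treewidth-$k$ pattern is the delicate part, and I expect it to require a pumping/fooling argument showing that any smaller candidate query would in fact also map homomorphically into one of the negatives, contradicting separation. Once this is in place, the construction transfers to $\TW(k)$-definitions by letting $S^-_n$ range over all $n$-ary tuples outside $S^+_n$ and checking that the separating queries retain their size.
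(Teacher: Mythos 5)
Your argument for statement~(1) is correct and is exactly the paper's: the explanation built in the right-to-left direction of Theorem~\ref{theo:games} is $q=\bigwedge_{\bar b\in S^-}q_{\bar b}$, each $q_{\bar b}$ is exponential in $\bigl|\prod_{\bar a\in S^+}(\D,\bar a)\bigr|$ by the ``moreover'' clause of Proposition~\ref{prop:games-tw}, and that product is itself exponential in the input, giving a double-exponential bound overall (likewise for definitions).

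For statement~(2), however, you have not given a proof but a plan whose decisive step is explicitly left open. The whole content of the lower bound is the claim that \emph{every} $\TW(k)$ query separating $P_n=\prod_{\bar a\in S^+_n}(\D_n,\bar a)$ from the negative tuples must have size exponential in $|P_n|$, and you yourself flag this as ``the delicate part'' to be handled by an unspecified ``pumping/fooling argument.'' Nothing in the proposal establishes it. Moreover, the quantitative bridge you invoke --- that the number of atoms of a separating $\TW(k)$ query is bounded below by the ``size of the smallest winning spoiler strategy'' --- is not justified as stated: the correspondence between spoiler wins and $\TW(k)$ separators from Proposition~\ref{prop:games-tw} guarantees \emph{existence} of a (possibly large) separating query when the spoiler wins, but extracting a \emph{lower} bound on the size of an arbitrary separator requires a careful argument relating query size to the number of rounds the spoiler needs, which you do not supply. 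The paper obtains the lower bound differently in form though similarly in spirit: it does not construct a fresh family but observes that the hard instances already built for the \exptime-hardness proof of Theorem~\ref{theo:games-tests} (via Lemma~\ref{lemma:template} and Theorem~\ref{theo:ii}) have, by inspection, no separating $\TW(k)$ query of size below $2^{2^n}$. So your instinct to reuse those encodings is the right one, but as written the proposal does not close the gap; it only names it.
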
 

\begin{proof}
From the proof of Theorem \ref{theo:games}, whenever there is a $\TW(k)$-explanation for $S^+$ and $S^-$ over $\D$ 
this can be assumed to be the CQ 
$q = \bigwedge_{\bar b \in S^-} q_{\bar b}(\bar x)$. 
From Proposition \ref{prop:games-tw}, each such $q_{\bar b}$ is of exponential size in the combined size of $\prod_{\bar a \in S^+} (\D,\bar a)$ and $(\D,\bar b)$, i.e., double exponential 
in the size of $\D$, $S^+$ and $S^-$. Thus, the size of 
$q$ is at most double exponential in that of $\D$, $S^+$ and $S^-$.  
The lower bound follows by inspection of the proof of Theorem \ref{theo:games-tests}.  
\end{proof} 

Notice that this establishes a difference with \CQ-explanations/definitions, 
which are at most of exponential size (see Proposition \ref{prop:size-cq}). 

\section{Desynchronizing the direct product}
\label{sec:des} 

We now look at the other source of complexity for the QBE and definability tests for CQs: The construction 
of the direct product $\prod_{\bar a \in S^+} (\D,\bar a)$. It is a priori not obvious how to define 
reasonable approximations of this construction with a meaningful theoretical interpretation. As a first step in 
this direction, we look at a simple idea that has been applied in the study of $\CQ$-definability: We ``desynchronize'' this 
direct product and consider each tuple $\bar a \in S^+$ in isolation. This leads to the following relaxed test:  

\begin{itemize} 
\item \underline{Desynchronized QBE test for CQs:} Takes as input a 
database $\D$ and $n$-ary relations $S^+,S^-$ over $\D$. It 
accepts iff for each $\bar a \in S^+$ and $\bar b \in S^-$ it is the case that
$(\D,\bar a) \not\to (\D,\bar b)$.  
\end{itemize} 

Similarly, we define the desynchronized definability test for CQs. 
Notice that, unlike the previous tests we have presented in the paper, the 
desynchronized tests do not require any safeness condition (for reasons we explain below). 

It follows from \cite{ANS13} that these tests capture the notion of explanations/definitions 
for the class of {\em unions} of CQs (UCQs). Recall that a UCQ is a formula $Q$ of the form 
$\bigvee_{1 \leq i \leq m} q_i(\bar x)$, where the $q_i(\bar x)$'s are CQs over 
the same schema. The evaluation $Q(\D)$ of $Q$ over database $\D$ corresponds to $\bigcup_{1 \leq i \leq m} q_i(\D)$. 
We denote by $\UCQ$ the class of UCQs. 
We then obtain the following: 

\begin{theorem}[implicit in \cite{ANS13}] \label{theo:des}  
Consider a database $\D$ and $n$-ary 
relations $S^+$ and $S^-$ over $\D$. 
There is a $\UCQ$-explanation for $S^+$ and $S^-$ over $\D$ if and only if
the desynchronized QBE test for CQs 
accepts $\D$, $S^+$, and $S^-$.
Similarly, for the $\UCQ$-definitions of $S^+$ and the desynchronized definability test for CQs. 
\end{theorem}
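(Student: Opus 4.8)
The plan is to prove both directions of the equivalence, handling explanations first and noting that definitions follow analogously.

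The key observation driving the proof is the following characterization of UCQ-explanations. A UCQ $Q = \bigvee_{1 \leq i \leq m} q_i(\bar x)$ satisfies $S^+ \subseteq Q(\D)$ exactly when every positive example $\bar a \in S^+$ is captured by \emph{some} disjunct $q_i$, i.e.\ $\bar a \in q_i(\D)$. Simultaneously, $Q(\D) \cap S^- = \emptyset$ means that \emph{no} disjunct captures \emph{any} negative example $\bar b \in S^-$. The natural candidate witness is thus to take one disjunct per positive example: for each $\bar a \in S^+$, let $q_{\bar a}$ be the canonical CQ whose body is $(\D,\bar a)$ (that is, $\D_{q_{\bar a}} = \D$ with $\bar a$ as the tuple of free variables), and set $Q := \bigvee_{\bar a \in S^+} q_{\bar a}(\bar x)$. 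This is exactly the ``desynchronized'' analogue of the canonical explanation, where instead of a single product CQ we take a union of the individual canonical CQs.

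First I would prove the right-to-left direction. Assume the desynchronized QBE test accepts, so $(\D,\bar a) \not\to (\D,\bar b)$ for all $\bar a \in S^+$ and $\bar b \in S^-$. Take $Q$ as above. Since $(\D_{q_{\bar a}}, \bar x) = (\D,\bar a) \to (\D,\bar a)$ via the identity homomorphism, we get $\bar a \in q_{\bar a}(\D) \subseteq Q(\D)$, so $S^+ \subseteq Q(\D)$. For the negative side, suppose some $\bar b \in S^-$ satisfies $\bar b \in Q(\D)$; then $\bar b \in q_{\bar a}(\D)$ for some $\bar a \in S^+$, meaning $(\D,\bar a) = (\D_{q_{\bar a}},\bar x) \to (\D,\bar b)$, contradicting the test's acceptance. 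Hence $Q(\D) \cap S^- = \emptyset$, and $Q$ is a $\UCQ$-explanation. Note this direction needs no safeness condition precisely because each canonical CQ $q_{\bar a}$ is built from the single safe pair $(\D,\bar a)$, so its free variables automatically appear in its body.

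For the left-to-right direction, suppose a $\UCQ$-explanation $Q = \bigvee_i q_i(\bar x)$ exists. Fix $\bar a \in S^+$ and $\bar b \in S^-$; I must show $(\D,\bar a) \not\to (\D,\bar b)$. Since $S^+ \subseteq Q(\D)$, there is a disjunct $q_i$ with $\bar a \in q_i(\D)$, i.e.\ $(\D_{q_i},\bar x) \to (\D,\bar a)$. If we had $(\D,\bar a) \to (\D,\bar b)$, composing homomorphisms would give $(\D_{q_i},\bar x) \to (\D,\bar b)$, so $\bar b \in q_i(\D) \subseteq Q(\D)$, contradicting $Q(\D) \cap S^- = \emptyset$. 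Thus no such homomorphism exists, and the test accepts. The main subtlety to check carefully is the composition step — that a homomorphism $(\D_{q_i},\bar x) \to (\D,\bar a)$ followed by $(\D,\bar a) \to (\D,\bar b)$ indeed yields a homomorphism mapping $\bar x$ to $\bar b$ — which is routine since homomorphisms compose and the free-variable tuples align. The definability case is identical except that $S^-$ is replaced by all $n$-ary tuples over $\D$ outside $S^+$, and the same argument applies verbatim using the definability version of the desynchronized test.
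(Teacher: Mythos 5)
Your proposal is correct and follows the same route the paper intends: the paper does not spell out a proof (it cites the result as implicit in the definability literature), but immediately after the theorem it identifies the canonical $\UCQ$-explanation as $\bigcup_{\bar a \in S^+}(\D,\bar a)$, which is precisely the witness you construct, and your two directions (identity homomorphism for $S^+\subseteq Q(\D)$, composition of homomorphisms for the negative side) are the standard argument. Your remark about why no safeness condition is needed also matches the paper's own explanation.
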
  

In this case, the {\em canonical} $\UCQ$-explanation/definition corresponds to 
$Q = \bigcup_{\bar a \in S^+} (\D,\bar a)$.  This explains why no safeness condition is required on the desynchronized tests,
as each pair of the form $(\D,\bar a)$, for $\bar a \in S^+$, is safe by definition. 
Notice that $Q$ consists of polynomially many CQs of polynomial size. Its evaluation $Q(\D)$ over a database 
$\D$ can thus be computed in single exponential time (as opposed to the 
double exponential time needed to evaluate the canonical \CQ-explanation $\prod_{\bar a \in S^+} (\D,\bar a)$).  

It is easy to see that the desynchronization of the direct product reduces the complexity 
of the general tests from \conexp\ to \coNP. It follows from \cite{ANS13} that this bound is optimal. 
As a corollary to Theorem \ref{theo:des} we thus obtain that QBE/definability for UCQs are \coNP-complete:  

\begin{proposition} \label{prop:des-comp}  {\em \cite{ANS13}} 
The following statements hold:  
\begin{enumerate} 
\item 
Deciding whether the desynchronized QBE test for CQs accepts $(\D,S^+,S^-)$ 
is \coNP-complete. Similarly, for the desynchronized definability test for CQs. 
\item 
{\sc $\UCQ$-query-by-example} and 
{\sc $\UCQ$-definability} are \coNP-complete.
\end{enumerate}
\end{proposition}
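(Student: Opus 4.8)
The second item follows immediately from the first together with Theorem~\ref{theo:des}: since that theorem identifies acceptance of the desynchronized QBE test (resp.\ definability test) on $(\D,S^+,S^-)$ with the existence of a $\UCQ$-explanation (resp.\ $\UCQ$-definition), the complexity of {\sc $\UCQ$-query-by-example} and {\sc $\UCQ$-definability} is exactly that of deciding acceptance of the corresponding test. I therefore concentrate on the first item, and within it on the QBE test; the definability test is addressed at the end.

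\emph{Membership in \coNP.} The plan is to guess a witness for \emph{rejection}. The desynchronized QBE test rejects $(\D,S^+,S^-)$ precisely when there exist $\bar a\in S^+$ and $\bar b\in S^-$ with $(\D,\bar a)\to(\D,\bar b)$. In contrast to the synchronized test, no exponential object appears here: a witness consists of $\bar a$, $\bar b$ and a single mapping $h$ from the elements of $\D$ to the elements of $\D$, all of size polynomial in $\D$. Checking that $h$ is a homomorphism with $h(\bar a)=\bar b$ is polynomial, so rejection is in \np\ and acceptance in \coNP. The same argument applies verbatim to the definability test, where $\bar b$ ranges over the (polynomially many) $n$-ary tuples over $\D$ outside $S^+$.

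\emph{\coNP-hardness of the QBE test.} First I would reduce from the complement of $3$-colorability, which is \coNP-complete already for connected graphs. Given a connected graph $G$ with a distinguished vertex $v_0$, I form the database $\D:=G\uplus K_3$ (disjoint union over the edge schema, with disjoint vertex sets), and set $S^+:=\{(v_0)\}$ and $S^-:=\{(c_0)\}$ for a fixed vertex $c_0$ of $K_3$. Then the test rejects iff $(\D,(v_0))\to(\D,(c_0))$, and the key claim is that this is equivalent to $G$ being $3$-colorable. The easy direction extends any proper coloring $f\colon G\to K_3$ with $f(v_0)=c_0$ by the identity on $K_3$. For the converse I would use that a homomorphism carries each connected component of the source into a single component of the target: since $G$ is connected and $h(v_0)=c_0$ lands in the $K_3$ component, all of $G$ must be mapped into $K_3$, so the restriction of $h$ to $G$ is a proper $3$-coloring sending $v_0$ to $c_0$. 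As $K_3$ is vertex-transitive, such a coloring exists iff $G$ is $3$-colorable. Hence the test accepts iff $G$ is \emph{not} $3$-colorable, giving \coNP-hardness.

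\emph{The definability test, and the main obstacle.} Here the negatives are implicitly all tuples outside $S^+$, and the naive padding ``$S^+:=$ everything but the target'' fails: if the palette is a bare $K_3$, then by vertex-transitivity every palette vertex maps onto the target (mapping $G$ by the identity), so the test would reject for reasons unrelated to $3$-colorability. The fix, which I view as the crux, is to \emph{rigidify} the palette: I would attach fresh unary markers $C_1,C_2,C_3$ to the three vertices of $K_3$, take the target tuple to be $(c_0)$ with $c_0$ the $C_1$-vertex, and set $S^+$ to be all unary tuples over $\D$ except $(c_0)$. The markers forbid the other palette vertices from mapping onto $c_0$ (a homomorphism must preserve $C_2$ and $C_3$), while imposing no constraint on the unlabelled vertices of $G$; consequently a positive tuple maps onto $(c_0)$ iff some vertex of $G$ does, which happens iff $G$ is $3$-colorable. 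Thus the definability test accepts iff $G$ is not $3$-colorable, completing the \coNP-hardness. Combining both tests with Theorem~\ref{theo:des} then yields the \coNP-completeness of {\sc $\UCQ$-query-by-example} and {\sc $\UCQ$-definability}, which is the second item.
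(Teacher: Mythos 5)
Your proposal is correct, but it is worth noting that the paper itself does not prove this proposition at all: it is stated as a citation to \cite{ANS13}, with the upper bound dismissed in the surrounding text as ``easy to see'' (rejection is witnessed by a single pair $(\bar a,\bar b)$ together with one polynomial-size homomorphism, exactly as you argue) and the matching lower bound delegated entirely to Antonopoulos et al. So you have supplied something the paper omits: a self-contained, elementary hardness argument. Your reduction from the complement of $3$-colorability is sound --- connectivity of $G$ forces the image of $G$ into the $K_3$ component once $h(v_0)=c_0$, and vertex-transitivity of $K_3$ lets you normalize the colour of $v_0$ --- and you correctly identify that the definability variant is the delicate case, since $S^-$ is implicit; rigidifying the palette with unary markers so that $(c_0)$ is the unique tuple outside $S^+$ and only $G$-vertices can reach it is a clean fix. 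One small slip: in the membership argument for the definability test you describe the tuples outside $S^+$ as ``polynomially many,'' which is false in general (there are $|\D|^n$ of them, and $n$ is part of the input); this does not damage the argument, because the \np\ witness for rejection only guesses \emph{one} such tuple $\bar b$, which has polynomial size, together with the homomorphism. With that phrasing corrected, your proof stands, and the derivation of item~2 from item~1 via Theorem~\ref{theo:des} matches how the paper intends the corollary to be read.
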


\subsection{Combining both relaxations} 

By combining both relaxations (replacing homomorphism tests with relations $\to_k$, for $k > 1$, and desynchronizing direct products) 
we obtain the {\em desynchronized $k$-pebble QBE (resp., definability) test for CQs}. Its definition coincides with that of the 
desynchronized QBE (resp., definability) test for CQs given above, save that now the homomorphism test $(\D,\bar a) \to (\D,\bar b)$ 
is replaced by $(\D,\bar a) \to_k (\D,\bar b)$. As is to be expected from the previous charaterizations, this test captures definability by the class 
of UCQs of bounded treewidth. 
Formally, let $\UTW(k)$ be the class of unions of CQs in $\TW(k)$ (for $k \geq 1$). Then:  

\begin{theorem} \label{theo:comb}  
Fix $k \geq 1$. Consider a database $\D$ and $n$-ary 
relations $S^+$ and $S^-$ over $\D$. 
There is a $\UTW(k)$-explanation for $S^+$ and $S^-$ over $\D$ if and only if
the desynchronized $(k+1)$-pebble QBE test for CQs 
accepts $\D$, $S^+$, and $S^-$.
Similarly, for the $\UTW(k)$-definitions of $S^+$ and the desynchronized $(k+1)$-pebble 
definability test for CQs.
\end{theorem}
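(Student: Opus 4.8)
The plan is to assemble two ingredients already available: the treewidth characterization of $\to_{k+1}$ from Proposition~\ref{prop:games-tw}, and the union-of-queries packaging underlying Theorem~\ref{theo:des}. I would argue the equivalence for explanations and note that definitions are entirely analogous.

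For the forward direction I would suppose $Q = \bigvee_i q_i(\bar x)$ is a $\UTW(k)$-explanation, so that each $q_i \in \TW(k)$, $S^+ \subseteq Q(\D)$, and $Q(\D) \cap S^- = \emptyset$, and then derive a contradiction from the test failing. If the desynchronized $(k+1)$-pebble QBE test rejects, then $(\D,\bar a) \to_{k+1} (\D,\bar b)$ for some $\bar a \in S^+$ and $\bar b \in S^-$. Since $\bar a \in S^+ \subseteq Q(\D)$, some disjunct satisfies $\bar a \in q_i(\D)$; as $q_i \in \TW(k)$ with $|\bar x| = n$, Proposition~\ref{prop:games-tw} turns $(\D,\bar a)\to_{k+1}(\D,\bar b)$ into $\bar b \in q_i(\D) \subseteq Q(\D)$, contradicting $Q(\D)\cap S^- = \emptyset$ since $\bar b \in S^-$.

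For the backward direction I would assume the test accepts, i.e. $(\D,\bar a) \not\to_{k+1} (\D,\bar b)$ for all $\bar a \in S^+$ and $\bar b \in S^-$, and build an explanation. Taking $S^- \neq \emptyset$, the ``moreover'' part of Proposition~\ref{prop:games-tw} supplies, for each such pair, a CQ $q_{\bar a,\bar b}(\bar x) \in \TW(k)$ with $\bar a \in q_{\bar a,\bar b}(\D)$ and $\bar b \notin q_{\bar a,\bar b}(\D)$. For each fixed $\bar a$ I would set $q_{\bar a}(\bar x) := \bigwedge_{\bar b \in S^-} q_{\bar a,\bar b}(\bar x)$, renaming the existential variables of distinct conjuncts apart, and then take $Q := \bigvee_{\bar a \in S^+} q_{\bar a}(\bar x)$. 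One checks that $\bar a \in q_{\bar a}(\D)$ while $q_{\bar a}(\D) \cap S^- = \emptyset$ (each $\bar b$ is excluded by its own conjunct), whence $S^+ \subseteq Q(\D)$ and $Q(\D) \cap S^- = \emptyset$. The definability variant is identical, letting $\bar b$ range over the $n$-ary tuples over $\D$ outside $S^+$ (polynomially many for fixed $n$), so that the resulting $Q$ satisfies $Q(\D) = S^+$.

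The delicate step, and the one I would treat most carefully, is verifying that $Q$ indeed lies in $\UTW(k)$, which reduces to showing $q_{\bar a} \in \TW(k)$. Since the conjuncts use pairwise disjoint existential variables, no atom involves existential variables of two different conjuncts, so the Gaifman graph of the existential variables of $q_{\bar a}$ is the disjoint union of those of the $q_{\bar a,\bar b}$; its treewidth is therefore the maximum over conjuncts, namely at most $k$ (the shared free variables $\bar x$ are not vertices of this graph and create no edges). This is exactly the observation already used in the proof of Theorem~\ref{theo:games}, and it is what makes desynchronization compatible with the pebble-game relaxation. The only remaining corner case, $S^- = \emptyset$ (respectively, the case where every $n$-ary tuple over $\D$ belongs to $S^+$), carries no negative constraint and is handled exactly as the corresponding degenerate case in the proof of Theorem~\ref{theo:games}.
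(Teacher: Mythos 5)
Your proposal is correct and takes essentially the approach the paper intends: the forward direction applies Proposition~\ref{prop:games-tw} to a disjunct witnessing $\bar a \in Q(\D)$, and the backward direction builds $Q = \bigvee_{\bar a \in S^+} \bigwedge_{\bar b \in S^-} q_{\bar a,\bar b}$ with existential variables renamed apart, which is exactly the combination of the constructions behind Theorems~\ref{theo:des} and~\ref{theo:games} (and no safety hypothesis is needed since each $(\D,\bar a)$ is safe by definition). The one step you leave implicit, the degenerate case $S^- = \emptyset$, is deferred in precisely the same way the paper defers it in the proof of Theorem~\ref{theo:games}.
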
  

Furthermore, in case there is a $\UTW(k)$-explanation for $S^+$ and $S^-$ over $\D$ (resp., 
a $\UTW(k)$-definition of $S^+$ over $\D$), then there is one such explanation/definition 
given by a union of polynomially many CQs in $\TW(k)$, each one of which is of at most exponential size.  

Interestingly, the combination of both relaxations yields tractability for the QBE test. In contrast, 
the definability test remains \coNP-complete. The difference lies on the fact that the QBE test 
only needs to perform a 
polynomial number of tests of the form 
$(\D,\bar a) \to_{k} (\D,\bar b)$ for each $\bar a \in S^+$ (one for each tuple $\bar b \in S^-$), 
while the definability test needs to perform 
exponentially many such tests (one for each tuple $\bar b$ outside $S^+$). Then:  

\begin{proposition} \label{prop:des-comb-comp} 
The following statements hold: 
\begin{enumerate} 
\item 
Deciding whether the desynchronized $k$-pebble 
QBE test for CQs accepts $(\D,S^+,S^-)$ 
can be solved in polynomial time for each fixed $k > 1$. As a consequence, 
{\sc $\UTW(k)$-query-by-example} is in polynomial time for each fixed $k \geq 1$.
\item 
If a $\UTW(k)$-explanation for $S^+$ and $S^-$ over $\D$ exists, 
we can compute the evaluation $Q(\D)$ of one such explanation $Q$ in exponential time. 
\item 
Deciding whether the desynchronized $k$-pebble 
definability test for CQs  accepts $(\D,S^+)$ 
is \coNP-complete for each fixed $k > 1$. As a consequence, 
{\sc $\UTW(k)$-definability} is \coNP-complete for each $k \geq 1$. 
\end{enumerate}
\end{proposition}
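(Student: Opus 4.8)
The plan is to handle the three items in turn, leaning on the characterization of Theorem~\ref{theo:comb} and the polynomial-time bound on the pebble game (Proposition~\ref{prop:games-poly}). For item~(1), I first observe that the desynchronized $k$-pebble QBE test accepts $(\D,S^+,S^-)$ if and only if $(\D,\bar a)\not\to_k(\D,\bar b)$ for every pair $\bar a\in S^+$ and $\bar b\in S^-$. The number of such pairs is $|S^+|\cdot|S^-|$, which is polynomial in the size of the input, and each individual check $(\D,\bar a)\to_k(\D,\bar b)$ runs in polynomial time for fixed $k>1$ by Proposition~\ref{prop:games-poly}. Hence the whole test runs in polynomial time. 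Combining this with Theorem~\ref{theo:comb}, which states that the test accepts iff a $\UTW(k)$-explanation exists, yields tractability of {\sc $\UTW(k)$-query-by-example}; the case $k=1$ is absorbed by running the $(k+1)=2$-pebble test.

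For item~(2), I would recall from the discussion following Theorem~\ref{theo:comb} that, whenever a $\UTW(k)$-explanation exists, the canonical one is a union of polynomially many CQs in $\TW(k)$, each of at most exponential size (these are the queries $q_{\bar b}$ extracted via Proposition~\ref{prop:games-tw}, one per $\bar b\in S^-$). To compute $Q(\D)$ without writing out $Q$ explicitly, I would follow the strategy of Theorem~\ref{theo:eval}: for each $\bar a\in S^+$, compute in exponential time the set of $n$-ary tuples $\bar c$ over $\D$ with $(\D,\bar a)\to_{k+1}(\D,\bar c)$, and take the union over $\bar a\in S^+$. Each such set is the evaluation of the ``desynchronized'' canonical explanation restricted to the disjunct associated with $\bar a$, and the union gives $Q(\D)$ for a genuine $\UTW(k)$-explanation $Q$. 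Since there are only polynomially many tuples $\bar a$ and each set computation is exponential, the total running time is exponential.

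For item~(3), the upper bound is the easier direction: by Theorem~\ref{theo:comb} the desynchronized $(k+1)$-pebble definability test accepts iff a $\UTW(k)$-definition exists, and the test fails exactly when there is some tuple $\bar b$ over $\D$ outside $S^+$ with $(\D,\bar a)\to_{k+1}(\D,\bar b)$ for all $\bar a\in S^+$; a nondeterministic algorithm guesses such a $\bar b$ and verifies the polynomially-many pebble-game conditions in polynomial time, placing the complement in $\NP$ and the test itself in $\coNP$. The $\coNP$-hardness is the step I expect to be the main obstacle, since the tractability of the QBE variant shows the hardness cannot come merely from the pebble-game machinery but must exploit the exponentially many candidate tuples $\bar b$ outside $S^+$ that the definability test must rule out. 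The natural route is to adapt the $\coNP$-hardness construction for the desynchronized (ordinary-homomorphism) definability test from Proposition~\ref{prop:des-comp}, ensuring that the databases produced by that reduction have the property that $(\D,\bar a)\to(\D,\bar b)$ holds iff $(\D,\bar a)\to_{k+1}(\D,\bar b)$ holds on the relevant pairs; in other words, one engineers the instances to have bounded treewidth (or to be otherwise ``pebble-game-exact'') so that the $\to$ and $\to_{k+1}$ relations coincide there and the original reduction carries over verbatim. Finally, Theorem~\ref{theo:comb} transfers $\coNP$-completeness of the test to {\sc $\UTW(k)$-definability}, again extending to $k=1$ via the $2$-pebble test.
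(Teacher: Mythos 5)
Your treatment of items (1) and (2) is correct and matches the paper's intended argument: the desynchronized $k$-pebble QBE test reduces to $|S^+|\cdot|S^-|$ independent checks of the relation $\to_k$, each in polynomial time by Proposition~\ref{prop:games-poly}, and the evaluation in item (2) is obtained exactly as in Theorem~\ref{theo:eval}, computing for each $\bar a \in S^+$ the set $S^e_{\bar a}$ of tuples $\bar c$ with $(\D,\bar a) \to_{k+1} (\D,\bar c)$ and returning $\bigcup_{\bar a \in S^+} S^e_{\bar a}$; one only needs to add the (easy) observation that each $S^e_{\bar a}$ is itself the evaluation of a single $\TW(k)$ query containing $\bar a$ and avoiding $S^-$, so that the union is the evaluation of a genuine $\UTW(k)$-explanation.

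In item (3) there is a quantifier slip: the desynchronized definability test fails iff there exist $\bar b \notin S^+$ and \emph{some} $\bar a \in S^+$ with $(\D,\bar a) \to_{k+1} (\D,\bar b)$, not a single $\bar b$ reachable from \emph{all} $\bar a \in S^+$ (the latter is the synchronized semantics; under the union semantics a spurious tuple $\bar b \in Q(\D)\setminus S^+$ is witnessed by one disjunct). This does not endanger the \coNP\ upper bound --- guess both $\bar a$ and $\bar b$ and verify in polynomial time --- but the certificate condition as you state it is wrong. More substantively, the hardness argument is only a plan, and the route you propose (porting the reduction behind Proposition~\ref{prop:des-comp} after making it ``pebble-game-exact'') is precisely the delicate point: the \coNP-hardness of the desynchronized test with ordinary homomorphisms could in principle reside in the \NP-hardness of a single check $(\D,\bar a)\to(\D,\bar b)$, and that source of hardness evaporates under $\to_k$. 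One must therefore exhibit a reduction in which every individual check is between structures on which $\to$ and $\to_{k+1}$ coincide (e.g.\ left-hand sides of treewidth at most $k$) and the hardness lives entirely in the exponential search over the tuples $\bar b \notin S^+$, for instance by encoding the truth assignments of a CNF formula as candidate tuples $\bar b$. You correctly identify where the hardness must come from, but the reduction itself --- the only nontrivial content of item (3) --- is left unverified.
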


\section{Conjunctive regular path queries}
\label{sec:ext} 

We now switch to study the QBE and definability problems in the context of graph databases. 
Let $\Sigma$ be a finite alphabet. Recall that a {\em graph
database} $\G = (V,E)$ over $\Sigma$ consists of a finite set $V$ of nodes and a set 
$E \subseteq V \times \Sigma \times V$ of directed edges labeled in $\Sigma$ (i.e., $(v,a,v') \in E$ represents the fact that 
there is an $a$-labeled edge from node $v$ to node $v'$ in $\G$). 
A {\em path} in $\G$ is a sequence 
$$\eta \ = \ v_0 a_1 v_1 a_2 v_2 \dots v_{k-1} a_{k} v_k, \ \ \ \ \text{for $k \geq 0$,}$$   
such that $(v_{i-1},a_i,v_i) \in E$ for each $1 \leq i \leq k$. 
The {\em label} of $\eta$, denoted $\lab(\eta)$, is the
word $a_1 a_2 \dots a_{k}$ in $\Sigma^*$. 
Notice that $v$ is a
path for each node $v \in V$. 
The label of such
path is the empty word $\varepsilon$. 

The basic navigational mechanism for querying graph databases is the class of
{\em regular path queries}, or RPQs (see, e.g., \cite{Woo, Bar13}).  
An RPQ $L$ over alphabet $\Sigma$ is a regular
expression over 
$\Sigma$. The {\em evaluation} $L(\G)$ of $L$ over
graph database $\G$ consists of those pairs $(v,v')$ of nodes in $\G$ such that there
is a path $\eta$ in $\G$ from $v$ to $v'$ whose label 
$\lab(\eta)$ satisfies $L$. The analogue of CQs in the context of graph databases is the class
of {\em conjunctive} RPQs, or CRPQs \cite{CGLV00}. 
Formally, a CRPQ $\gamma$ over $\Sigma$
is an expression of the form:  
$$ 
\exists \bar z (L_1(x_1,y_1) \wedge \dots \wedge L_m(x_m,y_m)),
$$
where each $L_i$ is a RPQ over $\Sigma$, for $1 \leq i \leq m$, and $\bar z$ is a tuple of
variables among $\{x_1,y_1,\dots,x_m,y_m\}$. We write $\gamma(\bar x)$ to
denote
that $\bar x$ is the tuple of free
variables of $\gamma$. A homomorphism from $\gamma$ to the graph database 
$\G$ is a mapping
$h$ from $\{x_1,y_1,\dots,x_m,y_m\}$ to the nodes of $\G$, such that
$(h(x_i),h(y_i)) \in L_i(\G)$ for each $1 \leq i \leq m$. 
The evaluation $\gamma(\G)$ of $\gamma(\bar x)$ over $\G$ is 
the set of tuples $h(\bar x)$ such that $h$ a homomorphism from $\gamma$
to $\G$.  We denote the class of CRPQs by $\CRPQ$. 

\subsection{The QBE and definability tests for CRPQs}

We present QBE/definability tests for CRPQs in the same spirit than the tests for CQs, save that we now 
use a  
notion of {\em strong homomorphism} from a product $\prod_{1 \leq i \leq n} \G_i$ 
of directed graphs to a single directed graph $\G$. This notion preserves, in a precise sense defined below, 
the languages defined by pairs of nodes in $\prod_{1 \leq i \leq n} \G_i$.
Interestingly, these tests yield a {\conexp} upper bound for the QBE/definability problems for CRPQs, which 
improves the \expspace\ upper bound from \cite{ANS13}. 
In conclusion, QBE/definability for CRPQs is no more difficult than for CQs.

We start with some notation. 
Let $v$ and $v'$ be nodes in a graph database $\G$. We define the following language in 
$\Sigma^*$: 
$$L_{v,v'}^\G \ := \ \{\lab(\eta)\mid \text{$\eta$ is a path in $\G$ from $v$ to $v'$}\}.$$ 
Moreover, if $\G_1 = (V_1,E_1)$ and $\G_2 = (V_2,E_2)$ are graph databases over $\Sigma$, their direct product 
$\G_1 \otimes \G_2$ is the graph database $(V,E)$ such that $V = V_1 \times V_2$ and there is an $a$-labeled edge in $E$ from 
node $(v_1,v_2)$ to node $(v_1',v_2')$ if and only if $(v_1,a,v_2) \in E_1$ and $(v_1',a,v_2') \in E_2$.  

Let then $\G_1,\dots,\G_n$ and $\G$ be graph databases over $\Sigma$. 
A {\em strong homomorphism} from 
$\prod_{1\leq i\leq n}\G_i$ to $\G$ is a mapping $h$ from the nodes of 
$\prod_{1\leq i\leq n}\G_i$ to the nodes of $\G$ such that for each pair  
$\bar v=(v_1,\dots,v_n)$ and $\bar v'=(v'_1,\dots,v'_n)$ of nodes in $\prod_{1\leq i\leq n}\G_i$, it is the case that: 
$$L_{v_i,v_i'}^{\G_i} \ \subseteq \ L_{h(\bar v), h(\bar v')}^\G, \ \ \ \text{ for some coordinate $i$ with $1 \leq i \leq n$.}$$  
We write $\prod_{1\leq i\leq n}\G_i\Rightarrow \G$ when there is a strong homomorphism $h$ from 
$\prod_{1\leq i\leq n}\G_i$ to $\G$. 
Note that in this case, $h$ must also be a (usual) homomorphism from 
$\prod_{1\leq i\leq n}\G_i$ to $\G$, i.e., $\prod_{1\leq i\leq n}\G_i\Rightarrow \G$ implies 
$\prod_{1\leq i\leq n}\G_i\rightarrow \G$. The next example shows that the converse does not hold in general:

\begin{example}
\label{exa:strong-homo}
Let $\vec C_n$ be the directed cycle of length $n$ over  $\{1,2,\dots,n\}$. 
We assume $\vec C_n$ to be represented as a graph database over the unary alphabet $\Sigma = \{a\}$. 
We then have that $\vec C_2\otimes \vec C_3 \rightarrow \vec C_6$, since $\vec C_2\otimes \vec C_3$ is isomorphic to $\vec C_6$ 
as shown below (we omit the labels):
\begin{figure}[h!]
\centering
\includegraphics[scale=0.4]{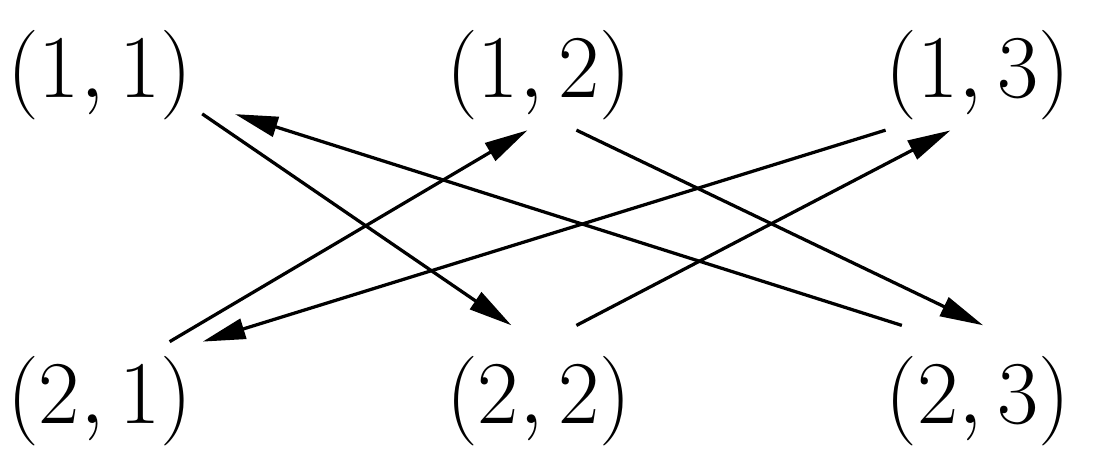}
\label{fig:reduction}
\end{figure}

\noindent
On the other hand, 
$\vec C_2\otimes \vec C_3 \not\Rightarrow \vec C_6$. To see this, take e.g. the homomorphism $h$ defined as 
$$\{(1,1)\mapsto 1, \, (2,2)\mapsto 2, \, (1,3)\mapsto 3, \, (2,1)\mapsto 4, \, (1,2)\mapsto 5, \, 
(2,3)\mapsto 6\}.$$ This is not a strong homomorphism as witnessed by the pair $(1,1)$ and $(2,2)$. 
Indeed, we have that: $$\big(\text{$h(1,1)=1$} \ \text{ and } \ \text{$h(2,2)=2$}\big) \ \text{ but } \ 
\big(\text{$L_{1,2}^{\vec C_2}\not \subseteq L_{1,2}^{\vec C_6}$} 
\ \text{ and } \ \text{$L_{1,2}^{\vec C_3}\not \subseteq L_{1,2}^{\vec C_6}$.}\big)$$ 
The reason is that $aaa \in L_{1,2}^{\vec C_2}$, $aaaa 
\in L_{1,2}^{\vec C_3}$, 
but none of these words is in $L_{1,2}^{\vec C_6}$. The same holds for any homomorphism $h : \vec C_2\otimes \vec C_3 \to 
\vec C_6$. \qed
\end{example}

If $(\G_1,\bar a_1),\dots, (\G_n,\bar a_n)$ and $(\G,\bar b)$ are graph databases with distinguished tuple of elements, then 
we write $\prod_{1\leq i\leq n}(\G_i,\bar a_i)\Rightarrow (\G, \bar b)$ if there is a strong homomorphism $h$ from $\prod_{1\leq i\leq n}\G_i$ to $\G$ 
such that $h(\bar a_1\otimes\cdots\otimes\bar a_n)=\bar b$. Next we present our tests for CRPQs:

\begin{itemize} 
\item \underline{QBE test for CRPQs:} Takes as input a 
graph database $\G$ and $n$-ary relations $S^+$ and $S^-$ over $\G$. It 
accepts if and only if $\prod_{\bar a \in S^+} (\G,\bar a) \not\Rightarrow (\G,\bar b)$ for each tuple $\bar b \in S^-$. 
\item \underline{Definability test for CRPQs:} Takes as input a 
graph database $\G$ and an $n$-ary relation $S^+$ over $\G$. It 
accepts if and only if $\prod_{\bar a \in S^+} (\G,\bar a) \not\Rightarrow (\G,\bar b)$ for each $n$-ary tuple $\bar b\notin S^+$.
\end{itemize} 

As it turns out, our tests characterize the non-existence of $\CRPQ$-explanations/definitions.
(Notice that unlike Proposition \ref{prop:folklore}, we need no safety conditions on QBE/definability 
tests for CRPQs 
for this characterization to hold). 

\begin{theorem}
\label{theo:tests-crpq}
The following hold:
\begin{enumerate}
\item Let $\G$ be a database and $S^+,S^-$ relations over $\G$. There is a $\CRPQ$-explanation for 
$S^+$ and $S^-$ 
over $\G$ if and only if the QBE test for CRPQs accepts $\G$, $S^+$, and $S^-$.
\item Let $\G$ be a database and $S^+$ a relation over $\G$. There is a \CRPQ-definition for $S^+$
over $\G$ if and only if the definability test for CRPQs accepts $\G$ and $S^+$.
\end{enumerate} 
\end{theorem}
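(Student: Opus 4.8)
The plan is to prove both statements by focusing on the QBE case (part~1); the definability case (part~2) is identical except that the negative tuples $\bar b$ range over all $n$-ary tuples outside $S^+$ rather than over $S^-$. Throughout I write $P:=\prod_{\bar a\in S^+}(\G,\bar a)$ for the product graph database with distinguished tuple $\bar c:=\bigotimes_{\bar a\in S^+}\bar a$, whose coordinates are indexed by the tuples $\bar a\in S^+$. The two directions of the equivalence require genuinely different arguments, so I treat them separately.

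For the direction ``there is a $\CRPQ$-explanation $\Rightarrow$ the QBE test accepts'' I argue by contraposition, reducing it to the following composition lemma, which is the technical heart of this direction: \emph{if $\gamma(\bar x)$ is any CRPQ with $S^+\subseteq\gamma(\G)$ and $P\Rightarrow(\G,\bar b)$, then $\bar b\in\gamma(\G)$.} Granting this, if the test fails, i.e.\ $P\Rightarrow(\G,\bar b)$ for some $\bar b\in S^-$, then every CRPQ containing $S^+$ in its evaluation also contains $\bar b$, so no explanation can avoid $S^-$ and none exists. To prove the lemma, pick for each $\bar a\in S^+$ a homomorphism $g_{\bar a}\colon\gamma\to\G$ with $g_{\bar a}(\bar x)=\bar a$, and let $h$ witness $P\Rightarrow(\G,\bar b)$. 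Assemble the $g_{\bar a}$ into one map $G$ from the variables of $\gamma$ into the nodes of $P$ by $G(z):=(g_{\bar a}(z))_{\bar a\in S^+}$, so that $G(\bar x)=\bar c$, and consider $h\circ G$, which sends $\bar x$ to $h(\bar c)=\bar b$. I claim $h\circ G$ is a homomorphism from $\gamma$ to $\G$. Fix an atom $L_i(x_i,y_i)$ of $\gamma$. For each $\bar a$ the map $g_{\bar a}$ realizes it, giving a word $w_{\bar a}\in L_i$ labelling a path from $g_{\bar a}(x_i)$ to $g_{\bar a}(y_i)$ in $\G$. Applying the \emph{strong}-homomorphism condition to the pair $G(x_i),G(y_i)$ yields a coordinate $\bar a^{*}\in S^+$ with $L^{\G}_{g_{\bar a^{*}}(x_i),g_{\bar a^{*}}(y_i)}\subseteq L^{\G}_{h(G(x_i)),h(G(y_i))}$; since $w_{\bar a^{*}}$ lies in the left-hand language and in $L_i$, it also labels a path from $h(G(x_i))$ to $h(G(y_i))$, so the atom is satisfied by $h\circ G$. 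This is exactly where ordinary homomorphism would \emph{not} suffice: we need the language \emph{inclusion} supplied by $\Rightarrow$ to transport the specific witness word $w_{\bar a^{*}}\in L_i$ across. Hence $\bar b\in\gamma(\G)$, proving the lemma.

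For the converse ``the QBE test accepts $\Rightarrow$ there is a $\CRPQ$-explanation'' I exhibit a canonical explanation built from $P$. The naive candidate --- the CRPQ whose variables are the nodes of $P$ and which carries an atom $L^{P}_{\bar u,\bar v}(z_{\bar u},z_{\bar v})$ for every pair of nodes --- is satisfied by all of $S^+$ (the coordinate projections $\pi_{\bar a}\colon P\to\G$ are homomorphisms, since $L^{P}_{\bar u,\bar v}\subseteq L^{\G}_{u_{\bar a},v_{\bar a}}$), but it is too permissive: a homomorphism of this query into $(\G,\bar b)$ only certifies $L^{P}_{\bar u,\bar v}\cap L^{\G}_{h(\bar u),h(\bar v)}\neq\emptyset$ for each pair, which is strictly weaker than the inclusions demanded by $\Rightarrow$ (cf.\ Example~\ref{exa:strong-homo}). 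To close this gap I replace each atom by a bundle of atoms encoding a finite \emph{test set}: since every language $L^{\G}_{s,t}$ is recognised by $\G$ viewed as an NFA on at most $|\G|$ states, a pumping/shortest-witness argument shows that for a length bound $N=|\G|^2$ and any nodes $s,t$ one has $L^{\G}_{u_i,v_i}\subseteq L^{\G}_{s,t}$ if and only if $L^{\G}_{u_i,v_i}\cap\Sigma^{<N}\subseteq L^{\G}_{s,t}$, a condition over finitely many words, each expressible as a singleton RPQ atom. The explanation then enforces, at every pair $\bar u,\bar v$, that some coordinate's test set is fully matched; its satisfaction by a tuple $\bar b$ forces a genuine strong homomorphism $P\Rightarrow(\G,\bar b)$, so the test's hypothesis $P\not\Rightarrow(\G,\bar b)$ for $\bar b\in S^-$ guarantees $\gamma(\G)\cap S^-=\emptyset$, while $S^+\subseteq\gamma(\G)$ is retained because each projection $\pi_{\bar a}$ matches its own coordinate's test set.

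\textbf{Main obstacle.} The delicate point throughout is that CRPQ atoms are purely existential (``there is a path whose label lies in $L$''), whereas a strong homomorphism asks for a \emph{universal} language inclusion together with a \emph{per-pair disjunction} over coordinates. Bridging this requires both the pumping bound above, to collapse each infinite inclusion to a finite, atom-expressible test set, and a careful encoding of the ``for some coordinate'' choice inside a single conjunctive CRPQ (rather than a union of CRPQs); verifying that this encoding simultaneously preserves all positive examples and reflects strong-homomorphism failures for every negative example is where the real work lies, and is also the only place where the quantitative structure of $\G$ (its state count, via $N$) enters.
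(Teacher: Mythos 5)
Your left-to-right direction is complete and correct: the composition lemma (bundle the witnessing homomorphisms $g_{\bar a}$ into a single map $G$ into the product, compose with the strong homomorphism $h$, and use the per-pair language inclusion to transport, for each atom $L_i(x_i,y_i)$, the specific witness word of the selected coordinate) is exactly what the notion of strong homomorphism is designed for, and it is the intended argument. The definability variant is indeed identical with $\bar b$ ranging over all tuples outside $S^+$.

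The right-to-left direction, however, has a genuine gap. You correctly diagnose that the naive product query only certifies nonempty intersections $L^{P}_{\bar u,\bar v}\cap L^{\G}_{h(\bar u),h(\bar v)}\neq\emptyset$ and that one must instead force, at each pair of product nodes, the disjunction over coordinates of a full language inclusion --- but you then declare the encoding of that disjunction inside a conjunctive query to be ``where the real work lies'' and never carry it out; that encoding \emph{is} the content of this direction. In addition, your quantitative claim is false: testing $L^{\G}_{u,v}\subseteq L^{\G}_{s,t}$ on words of length $<|\G|^2$ does not suffice, since a shortest witness of non-inclusion between NFA-recognised languages requires determinising the right-hand side, giving a bound of order $|\G|\cdot 2^{|\G|}$ (NFA universality, whose shortest counterexamples are exponential, embeds into path languages of a graph database). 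Both defects are repairable, and in fact no length bound is needed at all. For a pair $(\bar u,\bar v)$ of product nodes, write $K_{\bar a}:=L^{\G}_{u_{\bar a},v_{\bar a}}$; if some $K_{\bar a}=\emptyset$ the strong-homomorphism condition at that pair is vacuous and no atom should be added (a case you do not treat). Otherwise, for each ``bad'' target $M=L^{\G}_{s,t}$ containing no $K_{\bar a}$, add the single regular atom $L_M(z_{\bar u},z_{\bar v})$ with $L_M:=\bigl(\bigcup_{\bar a\in S^+}K_{\bar a}\bigr)\setminus M$. Each $L_M$ meets every $K_{\bar a}$ (because $K_{\bar a}\not\subseteq M$), so every projection $\pi_{\bar a}$ still witnesses $S^+\subseteq\gamma(\G)$; yet $L_M\cap M=\emptyset$, so any homomorphism sending $(z_{\bar u},z_{\bar v})$ to a bad pair violates an atom. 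Hence satisfaction of $\gamma$ by $\bar b$ forces a strong homomorphism onto $(\G,\bar b)$, and $P\not\Rightarrow(\G,\bar b)$ for all $\bar b\in S^-$ gives $\gamma(\G)\cap S^-=\emptyset$, with a single canonical query handling all negative examples at once. (Your test-set idea can also be completed by distributing $\bigvee_{\bar a}\bigwedge_{w\in T_{\bar a}}$ into a conjunction, over choice functions $f\in\prod_{\bar a}T_{\bar a}$, of finite-language atoms $\{f(\bar a):\bar a\in S^+\}(z_{\bar u},z_{\bar v})$ --- but with the corrected bound, and it must actually be written down.)
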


Since containment of regular languages can be checked in polynomial space \cite{SM73}, 
it is straightforward to check that both tests can be carried out in {\conexp}. Thus: 

\begin{theorem}
\label{theo:upper-crpq}
{\sc \CRPQ-query-by-example} and {\sc \CRPQ-definibility} are in {\conexp}.
\end{theorem}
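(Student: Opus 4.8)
The plan is to combine the characterization of Theorem~\ref{theo:tests-crpq} with the fact that containment of regular languages presented as NFAs is in \pspace~\cite{SM73}. By Theorem~\ref{theo:tests-crpq}, there is a $\CRPQ$-explanation (resp., $\CRPQ$-definition) if and only if the QBE (resp., definability) test for CRPQs accepts. To place the problems in \conexp, I would show that the complementary task---checking that the relevant test \emph{fails}---lies in \nexp. The QBE test fails precisely when there is some $\bar b \in S^-$ with $\prod_{\bar a \in S^+}(\G,\bar a) \Rightarrow (\G,\bar b)$; the definability test fails precisely when there is some $n$-ary tuple $\bar b \notin S^+$ with the same property. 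In both cases a witness consists of a ``bad'' tuple $\bar b$ together with a strong homomorphism realizing the reduction.

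Concretely, the \nexp\ procedure first guesses $\bar b$ (an $n$-ary tuple over $\G$, of polynomial size in either case) and then guesses a map $h$ from the nodes of $\prod_{\bar a \in S^+}\G$ to the nodes of $\G$. The product graph has $|\G|^{|S^+|}$ nodes, so $h$ is an object of exponential size, which is admissible for \nexp. It then checks that $h$ sends the distinguished tuple $\prod_{\bar a \in S^+}\bar a$ to $\bar b$, and that $h$ is a strong homomorphism.

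The verification of the strong homomorphism condition is the step that requires care, and is where the main obstacle lies: one must range over all exponentially many pairs $(\bar v,\bar v')$ of nodes of the product and, for each, test whether $L_{v_i,v_i'}^{\G} \subseteq L_{h(\bar v),h(\bar v')}^{\G}$ holds for some coordinate $i$. The crucial observation that keeps this within exponential time is that every language appearing in these tests is a language of the \emph{single} polynomial-size graph $\G$: treating $\G$ as an NFA with initial state $u$ and final state $u'$ yields an automaton of size $O(|\G|)$ recognizing $L_{u,u'}^{\G}$. Hence each containment test is between two languages given by polynomial-size NFAs and can be decided in space polynomial in the input~\cite{SM73}, and in particular in at most exponential time. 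Since there are exponentially many pairs and each coordinate check runs in exponential time, the whole verification runs in exponential time.

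Putting this together, the complement of each problem is solvable by an exponential-time nondeterministic procedure, so both problems lie in \conexp. The same argument applies verbatim to definability, the only change being that the guessed tuple $\bar b$ ranges over $n$-ary tuples outside $S^+$ rather than over $S^-$; this affects neither the size of the witness nor the cost of verification.
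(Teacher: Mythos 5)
Your proposal is correct and follows essentially the same route as the paper, which simply observes that the tests of Theorem~\ref{theo:tests-crpq} can be carried out in \conexp\ because containment of regular languages (here, of the polynomial-size NFAs obtained from $\G$ by fixing initial and final states) is decidable in polynomial space. Your write-up just makes explicit the guess-and-verify structure (exponential-size strong homomorphism, exponentially many pair checks, each in at most exponential time) that the paper leaves as ``straightforward.''
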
 

Whether these problems are complete for {\conexp} is left as an open question. 

\medskip
\noindent
{\bf CRPQ vs UCQ explanations.} 
It is easy to see that if there is a $\CRPQ$-explanation for $S^+$ and $S^-$ over $\G$, then there is also a $\UCQ$-explanation 
\cite{ANS13}. 
One may wonder then if QBE for CRPQs and UCQs coincide. 
If this was the case, we would directly obtain a {\coNP} upper bound for {\sc $\CRPQ$-query-by-example} 
from Proposition
\ref{prop:des-comp} (which establishes that {\sc \UCQ-query-by-example} is in \coNP). 
The next example shows that this is not the case: 

\begin{figure}
\centering
\includegraphics[scale=0.75]{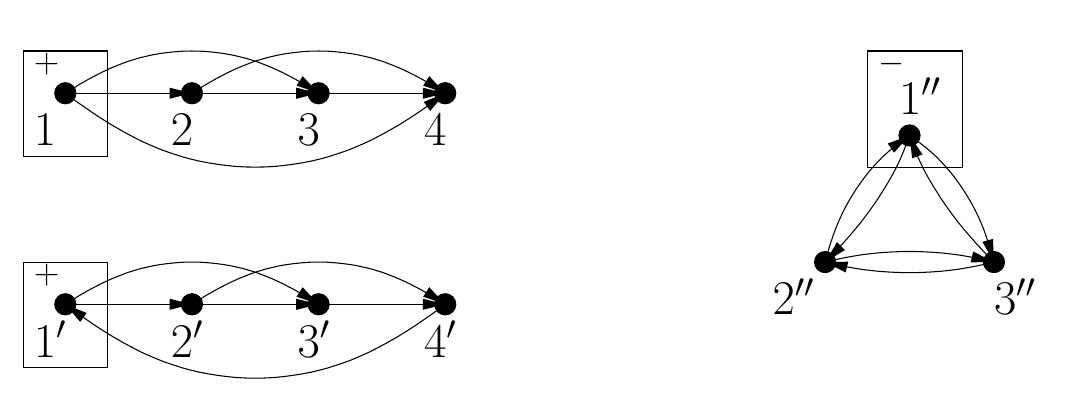}
\caption{The graph database $\G$ from Example \ref{ex:converse}.}
\label{fig:reduction}
\end{figure}

\begin{example} \label{ex:converse} 
Consider the graph database $\G$ over $\Sigma = \{a\}$ given by the three connected components depicted in Figure \ref{fig:reduction}
(we omit the labels). 
Let $S^+=\{1,1'\}$ and $S^-=\{1''\}$. Clearly, $(\G,1)\not\rightarrow (\G,1'')$ and $(\G,1')\not\rightarrow (\G,1'')$, 
since the underlying graph of each component on the left-hand side is a clique of size $4$, 
while the one on the right-hand side is a clique of size $3$. 
It follows that there is a \UCQ-explanation for $S^+$ and $S^-$ over $\G$. 
On the other hand, a straightforward construction shows that $(\G,1)\otimes(\G,1')\Rightarrow (\G,1'')$. 
The intuition is that, since $(4',1')$ and $(1,4)$ have opposite direction, they do not synchronize in the product and, thus, the product does not contain a
clique of size $4$. We conclude that there is no $\CRPQ$-explanation for $S^+$ and $S^-$ over $\G$. \qed
\end{example}
  
 \subsection{Relaxing the QBE and definability tests for CRPQs}
 
 In this section, we develop relaxations of the tests for CRPQs based on the ones we 
 studied for CQs in the previous sections. 
 Let us start by observing that desynchronizing the direct product 
 trivializes the problem in this case: In fact, as expected 
 the {\em desynchronized QBE/definability tests for CRPQs} characterize QBE/definability for the 
 {\em unions} of CRPQs (\UCRPQ). It is known, on the other hand, that QBE/definability for \UCRPQ\ and \UCQ\ coincide 
 \cite{ANS13}. The results then follow directly from the ones obtained in Section \ref{sec:des} for UCQs. In particular, 
 {\sc $\UCRPQ$-query-by-example} and {\sc $\UCRPQ$-definability} are \coNP-complete.  
 
 We thus concentrate on the most interesting case, which is 
 the relaxation of the homomorphism tests. 
 In order to approximate the strong homomorphism test, 
 we consider a variant of the existential pebble game. Fix $k > 1$. 
 Let $(\G_1,\bar a_1),\dots,(\G_n,\bar a_n)$ and $(\G,\bar b)$ be graph databases over $\Sigma$ 
 with distinguished tuples of elements. We define $\bar a := \bar a_1\otimes\cdots\otimes \bar a_n$.
 The {\em strong existential $k$-pebble game} 
 on $\prod_{1\leq i\leq n}(\G_i,\bar a_i)$ and $(\G,\bar b)$ is played as the 
 existential $k$-pebble game on $\prod_{1\leq i\leq n}(\G_i,\bar a_i)$ and $(\G,\bar b)$, but now, at each round, 
 if $c_1,\dots,c_k$ and $d_1,\dots,d_k$ are the elements covered by pebbles on $\prod_{1\leq i\leq n}\G_i$ and $\G$, respectively, then 
 the duplicator needs to ensure that $((c_1,\dots,c_k,\bar a),(d_1,\dots,d_k,\bar b))$ is a {\em strong partial homomorphism} 
 from $\prod_{1\leq i\leq n}\G_i$ and $\G$. 
 This means that for every pair $\bar v=(v_1,\dots,v_n)$ and $\bar v'=(v_1',\dots,v'_n)$ of nodes in 
 $\prod_{1\leq i\leq n}\G_i$ 
 that appear in $(c_1,\dots,c_k,\bar a)$, if $u$ and $u'$ are the elements in 
 $(d_1,\dots,d_k,\bar b)$ that correspond to $\bar v$ and $\bar v'$, respectively, then: $$L^{\G_i}_{v_i,v_i'} \ \subseteq \  
 L^{\G}_{u,u'}, \ \ \ \text{ for some coordinate $i$ with $1 \leq i \leq n$.}$$  
 We write $\prod_{1\leq i\leq n}(\G_i,\bar a_i)\Rightarrow_k(\G,\bar b)$ if the 
 duplicator has a winning strategy in the strong existential $k$-pebble game 
 on $\prod_{1\leq i\leq n}(\G_i,\bar a_i)$ and $(\G,\bar b)$. 
   
 By replacing the notion of strong homomorphism $\Rightarrow$ with its approximation $\Rightarrow_k$, for a fixed $k > 1$, 
 we can then define the following relaxed test:
 
\begin{itemize} 
\item \underline{$k$-pebble QBE test for CRPQs:} Takes as input a 
graph database $\G$ and $n$-ary relations $S^+$ and $S^-$ over $\G$. It 
accepts iff $\prod_{\bar a \in S^+} (\G,\bar a) \not\Rightarrow_k (\G,\bar b)$ for each tuple $\bar b \in S^-$. 
\end{itemize} 

The $k$-pebble definability test for CRPQs is defined analogously. As in the case of CQs, 
these tests characterize the existence of CRPQs-explanations/definitions of treewidth 
at most $k$. Formally, the treewidth of a CRPQ $\gamma = \exists \bar y \bigwedge_{1 \leq i \leq m} L_i(x_i,y_i)$ 
is the treewidth of the undirected 
graph that contains as nodes the existentially quantified variables of $\gamma$, 
i.e., those in $\bar y$, and whose set of edges is $\{\{x_i,y_i\} \mid 1 \leq i \leq m, \, x_i \neq y_i\}$. 
We denote by $\CTW(k)$ the class of CRPQs of treewidth at most $k$ (for $k\geq 1$). Then: 

\begin{theorem}
\label{theo:relaxed-tests-crpq}
Fix $k\geq 1$. Consider a database $\G$ and $n$-ary relations $S^+$ and $S^-$ over $\G$. 
\begin{enumerate}
\item There is a $\CTW(k)$-explanation for $S^+$ and $S^-$ over $\G$ if and only if the $(k+1)$-pebble QBE test 
for CRPQs accepts $\G$, $S^+$ and $S^-$.
\item There is a $\CTW(k)$-definition for $S^+$ over $\G$ if and only if 
the $(k+1)$-pebble definability test 
for CRPQs accepts $\G$ and $S^+$.
\end{enumerate}
\end{theorem}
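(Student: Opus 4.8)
The plan is to mirror the proof of Theorem~\ref{theo:games}, replacing the homomorphism relation $\to_{k+1}$ by the strong game relation $\Rightarrow_{k+1}$ and Proposition~\ref{prop:games-tw} by a strong-game analogue of it. Concretely, the engine of the argument will be the following claim: for graph databases $\G_1,\dots,\G_m$ and $\G$ over $\Sigma$ with tuples $\bar a_i$ over $\G_i$ and $\bar b$ over $\G$ of common arity $n$, one has $\prod_{1\le i\le m}(\G_i,\bar a_i)\Rightarrow_{k+1}(\G,\bar b)$ if and only if every CRPQ $\gamma(\bar x)\in\CTW(k)$ with $|\bar x|=n$ that satisfies $\bar a_i\in\gamma(\G_i)$ for all $i$ also satisfies $\bar b\in\gamma(\G)$; moreover, if $\prod_{1\le i\le m}(\G_i,\bar a_i)\not\Rightarrow_{k+1}(\G,\bar b)$ then there is such a $\gamma$ with $\bar a_i\in\gamma(\G_i)$ for all $i$ but $\bar b\notin\gamma(\G)$, of at most exponential size. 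I will then instantiate this with $\G_i=\G$ and the $\bar a_i$ ranging over $S^+$, so that ``$\bar a_i\in\gamma(\G_i)$ for all $i$'' becomes exactly $S^+\subseteq\gamma(\G)$.

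The soundness direction of the claim is the easy one, and it is where the strong game differs essentially from the ordinary one. First I would verify it for strong homomorphisms: if $h$ is a strong homomorphism from $\prod_i\G_i$ to $\G$ with $h(\bar a_1\otimes\cdots\otimes\bar a_m)=\bar b$ and each $g_i\colon\gamma\to\G_i$ witnesses $\bar a_i\in\gamma(\G_i)$, then the map $v\mapsto h(g_1(v),\dots,g_m(v))$ is a homomorphism from $\gamma$ to $\G$ sending $\bar x$ to $\bar b$. Indeed, fix an atom $L_j(x_j,y_j)$ of $\gamma$, write $\bar u=(g_1(x_j),\dots,g_m(x_j))$ and $\bar u'=(g_1(y_j),\dots,g_m(y_j))$, and apply the strong-homomorphism condition to the pair $\bar u,\bar u'$: there is a coordinate $i$ with $L^{\G_i}_{u_i,u_i'}\subseteq L^{\G}_{h(\bar u),h(\bar u')}$, and since $g_i$ is a homomorphism some word of $L_j$ lies in $L^{\G_i}_{u_i,u_i'}$, hence in $L^{\G}_{h(\bar u),h(\bar u')}$. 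The coordinate may depend on the atom, which is precisely why per-coordinate witnesses suffice. I would then lift this from strong homomorphisms to the strong $(k+1)$-pebble game in the standard way: a winning duplicator strategy is played along a width-$\le k$ tree decomposition of the existential variables of $\gamma$, using $k+1$ pebbles to cover one bag at a time and stitching the per-coordinate witnesses together through the maintained strong partial homomorphisms.

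Granting the claim, the theorem follows exactly as Theorem~\ref{theo:games}. For the forward direction I assume a $\CTW(k)$-explanation $\gamma$ exists but the $(k+1)$-pebble QBE test for CRPQs fails, so $\prod_{\bar a\in S^+}(\G,\bar a)\Rightarrow_{k+1}(\G,\bar b)$ for some $\bar b\in S^-$; since $S^+\subseteq\gamma(\G)$ provides a witness in every coordinate, the soundness direction gives $\bar b\in\gamma(\G)$, contradicting $\gamma(\G)\cap S^-=\emptyset$. For the converse I assume the test accepts and use the extraction part of the claim to obtain, for each $\bar b\in S^-$, a query $\gamma_{\bar b}\in\CTW(k)$ with $S^+\subseteq\gamma_{\bar b}(\G)$ and $\bar b\notin\gamma_{\bar b}(\G)$; taking $\gamma:=\bigwedge_{\bar b\in S^-}\gamma_{\bar b}$ with fresh existential variables per conjunct keeps treewidth at most $k$ (a disjoint union of Gaifman graphs of treewidth $\le k$), yields $S^+\subseteq\gamma(\G)$, and forces $\gamma(\G)\cap S^-=\emptyset$ because any homomorphism for $\gamma$ restricts to one for each $\gamma_{\bar b}$. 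The case $S^-=\emptyset$ is handled separately by a trivial explanation, and the definability statement is identical with $\bar b$ ranging over the $n$-tuples outside $S^+$. Note that, as in Theorem~\ref{theo:tests-crpq}, no safeness hypothesis is needed here.

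I expect the extraction direction of the claim to be the main obstacle. This is the analogue of the hard direction of Proposition~\ref{prop:games-tw} and must be adapted to regular languages: from the absence of a winning duplicator strategy in the strong $(k+1)$-pebble game one must build a CRPQ whose Gaifman graph of existential variables has treewidth at most $k$, whose atoms carry suitable regular languages, and which separates $\bar b$ from the positive side. The delicate point is the choice of atom languages: taking, for a pair of product-nodes $\bar v,\bar v'$ arising in the game, the union $\bigcup_i L^{\G_i}_{v_i,v_i'}$ makes the query hold in every coordinate (each coordinate contributes one language of the union), while the failure of the duplicator guarantees that no treewidth-$\le k$ assignment can realize all these languages simultaneously at $\bar b$. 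Controlling the treewidth through the $(k+1)$-pebble bags and bounding the size of $\gamma_{\bar b}$ by a single exponential are the remaining technical tasks.
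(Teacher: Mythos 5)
Your overall plan -- mirror Theorem~\ref{theo:games}, with a strong-game analogue of Proposition~\ref{prop:games-tw} as the engine, and derive both directions of the theorem from it -- is the right one, and your soundness argument (the per-atom choice of coordinate in a strong homomorphism, lifted to the game along a width-$k$ tree decomposition) is correct. The gap is exactly where you predicted it, in the extraction direction, and the specific fix you sketch does not work. You propose to label the atom between the variables for a pair of product-nodes $\bar v,\bar v'$ with the union $\bigcup_i L^{\G_i}_{v_i,v_i'}$, arguing that the failure of the duplicator ``guarantees that no assignment can realize all these languages simultaneously at $\bar b$.'' But a CRPQ atom $L(x,y)$ is satisfied at $(u,u')$ as soon as $L\cap L^{\G}_{u,u'}\neq\emptyset$, i.e.\ as soon as \emph{one} word of $L$ labels a path; it never asks for all the languages in a union to be realized. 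The violation of the strong partial homomorphism condition at $(u,u')$ only says that for each $i$ there is \emph{some} word of $L^{\G_i}_{v_i,v_i'}$ outside $L^{\G}_{u,u'}$; it does not make $\bigcup_i L^{\G_i}_{v_i,v_i'}$ disjoint from $L^{\G}_{u,u'}$. Concretely, take $\G_1=\vec C_2$, $\G_2=\vec C_3$, $\G=\vec C_6$ over $\Sigma=\{a\}$ as in Example~\ref{exa:strong-homo}, the product pair $\bar v=(1,1)$, $\bar v'=(2,2)$ and the target pair $(1,2)$: neither $L^{\vec C_2}_{1,2}=a(aa)^*$ nor $L^{\vec C_3}_{1,2}=a(aaa)^*$ is contained in $L^{\vec C_6}_{1,2}=a(a^6)^*$, so $(1,2)$ is a ``bad'' image pair, yet the union contains the word $a\in L^{\vec C_6}_{1,2}$, so your atom is satisfied there and fails to rule this assignment out.

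The repair is to use, between the two variables associated with $\bar v,\bar v'$, not one atom but one atom \emph{per bad target pair} $(u,u')$ (i.e.\ per pair with $L^{\G_i}_{v_i,v_i'}\not\subseteq L^{\G}_{u,u'}$ for every $i$), labeled by the finite language $\{w_1,\dots,w_m\}$ where $w_i\in L^{\G_i}_{v_i,v_i'}\setminus L^{\G}_{u,u'}$. Each projection $\pi_i$ still satisfies every such atom (via $w_i$), while at a bad pair $(u,u')$ the corresponding atom is violated because none of the $w_i$ lies in $L^{\G}_{u,u'}$; so satisfying all atoms forces the image pair to be good, which is exactly the strong partial homomorphism condition you need to feed back into the spoiler's winning strategy. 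Multiple atoms between the same pair of variables do not affect treewidth, there are at most $|\G|^2$ bad pairs per product pair, and the witness words can be bounded via the product of the NFA for $L^{\G_i}_{v_i,v_i'}$ with the complement of $L^{\G}_{u,u'}$, which keeps $\gamma_{\bar b}$ within the claimed exponential size. With this substitution your argument goes through as written; without it, the extraction direction, and hence the right-to-left implications of both parts of the theorem, is unproved.
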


Using similar ideas as for the existential $k$-pebble game, it is possible to prove that the problem of checking whether $\prod_{1\leq i\leq n}(\G_i,\bar a_i)\Rightarrow_k(\G,\bar b)$, given $(\G_1,\bar a_1),\dots,(\G_n,\bar a_n)$ and $(\G,\bar b)$, can be solved in exponential 
time for each fixed $k > 1$. 
We then obtain that the $k$-pebble QBE/definability tests for CRPQs take exponential time, 
and from Theorem \ref{theo:relaxed-tests-crpq} that {\sc $\CTW(k)$-query-by-example} and 
{\sc $\CTW(k)$-definability} are in \exptime\ (same than for $\TW(k)$ as stated in Corollary \ref{coro:completeness}). We also obtain an exponential upper bound on the cost of evaluating a $\CTW(k)$-explanation (in case it exists):

\begin{proposition}
\label{prop:eval-crpq}
Fix $k \geq 1$. The following statements hold:  
\begin{enumerate} 
\item 
{\sc $\CTW(k)$-query-by-example} and {\sc $\CTW(k)$-definability} are in \exptime.
\item 
Moreover, in case that there is a $\CTW(k)$-explanation of $S^+$ and $S^-$ over $\G$, 
the evaluation $\gamma(\G)$ of one such explanation $\gamma$ over $\G$ 
can be computed in exponential time. 
\end{enumerate} 
\end{proposition}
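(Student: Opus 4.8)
The plan is to mirror the development for CQs in Corollary~\ref{coro:completeness} and Theorem~\ref{theo:eval}, replacing the homomorphism approximation $\to_{k+1}$ by its CRPQ counterpart $\Rightarrow_{k+1}$. For part~1, I would first invoke Theorem~\ref{theo:relaxed-tests-crpq} to reduce {\sc $\CTW(k)$-query-by-example} (resp.\ {\sc $\CTW(k)$-definability}) to deciding whether the $(k+1)$-pebble QBE (resp.\ definability) test for CRPQs accepts the input. The QBE test performs one check $\prod_{\bar a \in S^+}(\G,\bar a) \Rightarrow_{k+1} (\G,\bar b)$ per tuple $\bar b \in S^-$, i.e.\ polynomially many checks, while the definability test performs one check per $n$-ary tuple $\bar b \notin S^+$, i.e.\ at most $|\G|^n$ checks, which is exponential. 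Using the exponential-time bound for deciding $\Rightarrow_{k+1}$ stated above (the product $\prod_{\bar a \in S^+}(\G,\bar a)$ has $|\G|^{|S^+|}$ nodes, the induced pebble arena is of exponential size, and the language-containment conditions $L^{\G_i}_{v_i,v_i'} \subseteq L^{\G}_{u,u'}$ are decidable in \pspace\ \cite{SM73}), each individual check costs exponential time. Since a product of two exponential quantities is still exponential, both tests, and hence both problems, lie in \exptime.

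For part~2, I would follow the template of Theorem~\ref{theo:eval}. After verifying the existence of a $\CTW(k)$-explanation via the $(k+1)$-pebble QBE test for CRPQs, the algorithm computes the set
$$S^e \ := \ \{\bar b \mid \prod_{\bar a \in S^+}(\G,\bar a) \Rightarrow_{k+1} (\G,\bar b)\}$$
by running, for each of the at most $|\G|^n$ candidate tuples $\bar b$, the strong existential $(k+1)$-pebble game against the (exponential-size) product. By the timing analysis of part~1 this takes exponential time in total. By construction $S^+ \subseteq S^e$ and, since the test accepts, $S^e \cap S^- = \emptyset$; the remaining task is to show that $S^e$ coincides with the evaluation over $\G$ of some single $\CTW(k)$-explanation $\gamma$, so that outputting $S^e$ indeed returns $\gamma(\G)$. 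Note that we do not output $\gamma$ itself, which in general is only of double-exponential size, exactly as for $\TW(k)$-explanations.

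To establish $S^e = \gamma(\G)$ I would use the CRPQ analogue of Proposition~\ref{prop:games-tw} underlying Theorem~\ref{theo:relaxed-tests-crpq}: whenever $\prod_{\bar a \in S^+}(\G,\bar a) \not\Rightarrow_{k+1} (\G,\bar b)$ there is a CRPQ $\gamma_{\bar b}$ in $\CTW(k)$ that is satisfied by $\prod_{\bar a \in S^+}(\G,\bar a)$ at its distinguished tuple but with $\bar b \notin \gamma_{\bar b}(\G)$, and, conversely, $\Rightarrow_{k+1}$ preserves membership in the evaluation of every $\CTW(k)$-CRPQ. Setting $\gamma := \bigwedge_{\bar b \notin S^e} \gamma_{\bar b}$, a conjunction over shared free variables and hence again a CRPQ in $\CTW(k)$, gives $\gamma(\G) \subseteq S^e$ since each $\bar b \notin S^e$ is excluded by its own conjunct $\gamma_{\bar b}$, and $S^e \subseteq \gamma(\G)$ since every $\bar c \in S^e$ satisfies $\prod_{\bar a\in S^+}(\G,\bar a) \Rightarrow_{k+1} (\G,\bar c)$ and so inherits satisfaction of each $\CTW(k)$-conjunct. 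Together with $S^+ \subseteq S^e$ and $S^e \cap S^- = \emptyset$, this makes $\gamma$ a $\CTW(k)$-explanation with $\gamma(\G) = S^e$. I expect the main obstacle to be precisely this CRPQ analogue of Proposition~\ref{prop:games-tw}, in particular the ``moreover''-style clause producing the separating query $\gamma_{\bar b}$: one must read off, from a losing position of the duplicator in the strong $(k+1)$-pebble game, a treewidth-$k$ CRPQ whose atom labels are regular languages certifying the relevant failures of containment $L^{\G_i}_{v_i,v_i'} \subseteq L^{\G}_{u,u'}$, which is the genuinely CRPQ-specific point absent from the CQ argument. Since this machinery already supports Theorem~\ref{theo:relaxed-tests-crpq}, I would reuse it directly rather than reprove it.
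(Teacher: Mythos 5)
Your proposal is correct and follows essentially the route the paper itself takes (the paper only sketches this proof in the paragraph preceding the proposition): part 1 combines Theorem~\ref{theo:relaxed-tests-crpq} with the exponential-time decidability of $\Rightarrow_{k+1}$ and a count of the number of tests, and part 2 transplants the $S^e$ construction of Theorem~\ref{theo:eval}, with the CRPQ analogue of Proposition~\ref{prop:games-tw} (the machinery behind Theorem~\ref{theo:relaxed-tests-crpq}) supplying the separating queries $\gamma_{\bar b}$. You correctly identify that analogue as the only genuinely CRPQ-specific ingredient, and reusing it rather than reproving it is exactly what the paper does.
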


\section{Future work}
\label{sec:conc} 

We have left some problems open. The most notable one is determining the precise 
complexity of QBE/definability for CRPQs (resp., CRPQs of bounded treewidth). We have only obtained upper bounds for 
these problems that show that they are no more difficult than for CQs, but proving matching lower bounds 
seems challenging.  

An interesting line for future research is studying what to do when no explanation/definition exists for a set of examples. 
In such cases one might want to compute 
a query that minimizes the ``error'', e.g., the number of misclassified examples. 
We plan to study whether the techniques presented in this paper can be extended to deal with such problems. 

\medskip
\noindent
{\footnotesize 
{\bf Acknowledgements:} We are grateful to Leonid Libkin for their helpful 
comments in earlier versions of the paper and to Timos Antonopoulos for enlightening discussions about the notion of definability. 
We are also indebted to the reviewers of this article who helped us improving 
the presentation. In particular, one of the revieweres identified a subtle but important issue with the 
QBE tests presented in the paper that would have gone unnoticed otherwise.  
Barcel\'o and Romero are funded by the Millennium Nucleus Center for Semantic Web Research under
Grant NC120004. Romero is also funded by a Conicyt PhD scholarship.}

\bibliography{icdt2016}

\end{document}